\keywords{quantum programming languages, quantum compilers}
\pgfplotsset{compat=1.18}
\newcommand{\tikzsetfigurename}[1]{}
\def\@secfont{\sffamily\bfseries\section@raggedright\MakeTextUppercase}
\lstdefinelanguage{cobble}
{morekeywords={
     Basic, Sum, Tensor, Poly, kron, subnormalization, total_cost, optimize, queries
   },
 sensitive=true,
 morecomment=[n]{/*}{*/},
 morecomment=[l]{\#},
 morestring=[b]", % chktex 18
 % mathescape=true,
 escapechar=\%,
 columns=fullflexible,
 keepspaces=true,
 basicstyle=\ttfamily,
 mathescape=true,
}
\lst@ifdisplaystyle\color{blue!50!black}\bfseries\fi\ttfamily,
\DeclarePairedDelimiter\abs{\lvert}{\rvert}%
\DeclarePairedDelimiter\norm{\lVert}{\rVert}
\crefname{lemma}{Lemma}{Lemmas}
\Crefname{lemma}{Lemma}{Lemmas}
\numberwithin{equation}{section}
\newcommand{\oracle}[1]{\textcolor{orange!80!black}{#1}}
\begin{document}

\author{Charles Yuan}
\affiliation{
  \institution{University of Wisconsin--Madison}
  \country{USA}
}
\email{charlesyuan@cs.wisc.edu}
\orcid{0000-0002-4918-4467}

\newcommand{\LangName}{Cobble}
\title{\LangName{}: Compiling Block Encodings for Quantum Computational Linear Algebra}

\begin{abstract}
Quantum algorithms for computational linear algebra promise up to exponential speedups for applications such as simulation and regression, making them prime candidates for hardware realization. But these algorithms execute in a model that cannot efficiently store matrices in memory like a classical algorithm does, instead requiring developers to implement complex expressions for matrix arithmetic in terms of correct and efficient quantum circuits. Among the challenges for the developer is navigating a cost model in which conventional optimizations for linear algebra, such as subexpression reuse, can be inapplicable or unprofitable.

In this work, we present \LangName{}, a language for programming with quantum computational linear algebra. \LangName{} enables developers to express and manipulate the quantum representations of matrices, known as block encodings, using high-level notation that automatically compiles to correct quantum circuits.
\LangName{} features analyses that compute the time and space usage of programs, as well as optimizations that reduce overhead and generate efficient circuits using state-of-the-art techniques such as the quantum singular value transformation.
We evaluate \LangName{} on benchmark kernels for simulation, regression, search, and other applications, showing $2.6\times$--$25.4\times$ speedups on these benchmarks compared to the unoptimized baseline.
\end{abstract}

\maketitle

\section{Introduction} \label{sec:introduction}

Linear algebra is among the most promising applications of a quantum computer.
A quantum state of $n$ qubits encodes a vector of $2^n$ elements, enabling quantum algorithms to solve certain problems in linear systems \citep{harrow2009}, physical simulation \citep{childs2012}, regression \citep{chakraborty2023}, and differential equations \citep{berry2014} in time polylogarithmic in the problem dimension. In suitable settings, the speedup over classical methods is exponential.

Practically realizing these speedups, however, poses a programming challenge even as hardware continues to advance \citep{bluvstein2023,google2025} toward the eventual availability of a scalable, fault-tolerant quantum computer.
A quantum algorithm cannot efficiently store arbitrary matrices in memory like a classical algorithm does, because just processing every element would itself take linear time, limiting the possible speedup. Instead, quantum algorithms manipulate implicit representations of matrices that exploit structure, such as sparsity and factorization, using a tool called \emph{block encoding} \citep{low2019,gilyen2019,martyn2021}.

\paragraph{Block Encoding}
Every quantum state is a unit vector, and every circuit of quantum logic gates is a unitary matrix.
A \emph{block encoding} of a matrix $A$ is a unitary matrix with $A$ embedded in its top left block.
This concept enables a quantum circuit to implement a non-unitary $A$ as a part of a larger unitary computation, to multiply $A$ with a vector by applying the circuit to a quantum state, and to perform arithmetic on matrices given that $A + B$ may not be unitary even if $A$ and $B$ are.

A research paper specifies a quantum algorithm $f$ on an input $A$ by giving matrix arithmetic to turn a block encoding of $A$ into $f(A)$. In a linear solver, $f(A) = A^{-1}$; in simulation \citep{gilyen2019}, $f(A) = e^{-iAt}$; and in search \citep{martyn2021}, $f(A) = \mathrm{sign}(A) = A (A^2){}^{-1/2}$.
A paper typically explains how $f(A)$ can be built from arithmetic operators, e.g.\ $A^{-1} = I + (I - A) + (I - A){}^2 + \cdots$.
Many papers \citep{sunderhauf2024,camps2024} also show how a block encoding of the desired input $A$ can be built from sums $B + C$, products $B \cdot C$, and tensor products $B \otimes C$ of realizable parts such as primitive logic gates or black-box circuits given as oracles.

\paragraph{Abstraction Challenge}
To translate the block encoding arithmetic in an algorithms paper into a program to run on hardware, a developer could search for circuit constructions for matrix operators such as addition.
A hurdle is that existing quantum programming languages such as Silq \citep{bichsel2020}, Q\# \citep{svore2018}, and Qiskit \citep{javadiabhari2024} do not provide abstractions for creating and manipulating block encodings. They instead enable the developer to explicitly specify bit-level logic gates for each matrix expression, which for many linear algebra applications would be millions of gates \citep{scherer2017}.
Block encodings present an opportunity to design new abstractions that ease this burden for developers of libraries and applications.

\paragraph{Efficiency Challenge}
Reasoning about block encodings is essential not only to specify but also to optimize quantum algorithms, where they often act as performance bottlenecks \citep{nibbi2024,li2023}.
This reasoning is hard due to an unconventional cost model. The cost of a block encoding is given by not just its number of logic gates, but also a multiplier called \emph{subnormalization} that dictates how many repetitions of the program are needed to produce the correct answer. This factor is not obvious from the literal length of a program and must be calculated separately.

Moreover, different ways to build the same matrix may be cheaper or costlier. A simple example is $A$ vs. $\frac{1}{2}(A + A)$, which uses twice as many gates. A more subtle concern is that classical techniques to optimize linear algebra, such as subexpression reuse, can be unprofitable in quantum programs. For example, there is no general mechanism to compute a block encoding of $(A + B) \cdot (A + B)$ via just one addition and one multiplication by reusing the intermediate value of $A + B$.

\paragraph{Programming with Block Encodings}
To bridge the abstraction gap, we present \LangName{}, a language to build and compose block encodings.
\LangName{} offers high-level mathematical operators for matrix arithmetic, including sum, product, tensor product, and choice.
In \LangName{}, a developer specifies block encodings for basic matrices using standard gates or application-specific circuits, and then composes them together using these mathematical operators.
Like the Eigen \citep{gunnebaud2010} linear algebra library but unlike prior quantum circuit languages, \LangName{}~enables a developer to implement an algorithm using publication-style math notation rather than logic gates alone.

The \LangName{} compiler automatically translates a high-level program to a circuit to run on hardware, while the \LangName{} type system guarantees that every well-typed program has a valid circuit. \LangName{} also provides a cost model, derived from theory research, that a developer can use to compute the dominant factors in the time and space usage of programs, including subnormalization.

\paragraph{Optimizations for Block Encodings}
To tackle the efficiency challenge, we present two optimizations for \LangName{} programs that reduce gate and subnormalization costs.
The first, \emph{sum fusion}, flattens nested linear combinations of matrices to remove intermediate overhead from subnormalization.
The second, \emph{polynomial fusion}, combines repeated sums and products into a compact form that has an efficient circuit given by the quantum singular value transformation \citep{gilyen2019}.
These optimizations exploit the algebraic structure of block encodings, and are complementary to techniques for end-to-end circuit synthesis from arbitrary matrices \citep{hantzko2024}.

We implement \LangName{} as an embedded language in Python. To establish benchmarks for the emerging domain of quantum linear algebra, we use \LangName{} to express a set of kernels for simulation, regression, and other applications. Our optimizations reduce their total runtime cost -- gate count times subnormalization -- by $2.6\times$--$25.4\times$ over the unoptimized program. \LangName{} takes little compile time and can yield more speedup than existing circuit optimizers on these programs.

\paragraph{Contributions}
In this work, we present the following contributions:
\begin{enumerate}[labelwidth=!, labelsep=0.5em, leftmargin=3em, itemindent=0pt, listparindent=\parindent, align=parleft]
  \item[Sec.~\ref{sec:language}:] \LangName{}, a quantum programming language of mathematical operators over block-encoded matrices, in which well-typed programs compile to valid quantum circuits;
  \item[Sec.~\ref{sec:cost-model}:] A cost model to calculate the time and space usage of \LangName{} programs, indicating when classical optimizations are inapplicable or unprofitable in the quantum setting;
  \item[Sec.~\ref{sec:optimizations}:] The sum and polynomial fusion optimizations for \LangName{}, which underpin an optimizing rewrite system that is sound, cost-nonincreasing, and strongly normalizing; and
  \item[Sec.~\ref{sec:evaluation}:] An evaluation on benchmarks for simulation, regression, and other applications that shows $2.6\times$--$25.4\times$ reductions in total runtime cost over the unoptimized baseline.
\end{enumerate}

\paragraph{Impact}
Our work enables developers to express core components for quantum linear algebra using high-level notation rather than qubit-level circuits and to soundly reduce their runtime costs. These ideas pave way to more accurate estimates of hardware requirements, more useful and robust benchmarks and compilers, and more near-term demonstrations of quantum advantage.

\section{Background} \label{sec:background}

This section briefly reviews mathematical concepts needed for this work.
A reader familiar with quantum computation but not with quantum algorithms for linear algebra is encouraged to read \Cref{sec:qcla}.
For more detail, see the texts of \citet{lin2022} and \citet{nielsen_chuang_2010}.

\subsection{Quantum Computation}

A \emph{qubit} exists in a \emph{superposition} or linear combination $\lambda_0 \ket{\texttt{0}} + \lambda_1 \ket{\texttt{1}}$ of two vectors $\ket{\texttt{0}} = [1, 0]{}^\top$ and $\ket{\texttt{1}} = [0, 1]{}^\top$, where $\lambda_0, \lambda_1 \in \mathbb{C}$ are \emph{amplitudes} satisfying $\abs{\lambda_0}^2 + \abs{\lambda_1}^2 = 1$. Examples of qubits include $\ket{\texttt{0}}$, $\ket{\texttt{1}}$, and the states $\frac{1}{\sqrt{2}}{(\ket{\texttt{0}} + e^{i\varphi} \ket{\texttt{1}})}$ where $\varphi \in [0, 2\pi)$ is known as a \emph{phase}. % chktex 9

A \emph{quantum state} of $n$ qubits is a superposition over $n$-bit strings. For example, $\smash{\frac{1}{\sqrt{2}}}{(\ket{\texttt{00}}+\ket{\texttt{11}})}$ is a quantum state over two qubits.
Formally, multiple component states combine via the \emph{tensor product} $\otimes$ of vectors, such that the state $\ket{\texttt{01}}$ is defined as $\ket{\texttt{0}} \otimes \ket{\texttt{1}}$. We use the customary notations $\ket{\texttt{01}}$, $\ket{\texttt{0}, \texttt{1}}$, and $\ket{\texttt{0}}\ket{\texttt{1}}$ to denote $\ket{\texttt{0}} \otimes \ket{\texttt{1}}$, and the notation $\ket{\texttt{0}}^{\otimes n}$ to denote $n$ copies of $\ket{\texttt{0}}$.

\paragraph{Unitary Operators}
A \emph{quantum logic gate} manipulates the bit strings and their amplitudes within a quantum state without collapsing the state from superposition.
The semantics of a quantum gate is a unitary matrix $U$ --- a linear, norm-preserving, and invertible operator with $U^{-1} = U^{\dagger}$.

The quantum gates over a single qubit include the NOT gate $X = \begin{bsmallmatrix} 0 & 1 \\ 1 & 0 \end{bsmallmatrix}$, mapping $\ket{x} \mapsto \ket{\texttt{1} - x}$ for $x \in \{\texttt{0}, \texttt{1}\}$; the phase flip gate $Z = \smash{\begin{bsmallmatrix} 1 & 0 \\ 0 & -1 \end{bsmallmatrix}}$, mapping $\ket{x} \mapsto (-1){}^x\ket{x}$; the $\pi/4$ rotation gate $T$, mapping $\ket{x} \mapsto e^{ix\pi/4}\ket{x}$; and the Hadamard gate $H$, mapping $\ket{x} \mapsto \frac{1}{\sqrt{2}}{(\ket{\texttt{0}} + (-1){}^x \ket{\texttt{1}})}$.

The effect of a gate may be \emph{controlled} by one or more qubits. For example, the two-qubit CNOT gate maps $\ket{\texttt{0}, x} \mapsto \ket{\texttt{0}, x}$ and $\ket{\texttt{1}, x} \mapsto \ket{\texttt{1}, \textrm{NOT}\ x} = \ket{\texttt{1}, \texttt{1}-x}$. The three-qubit Toffoli gate is the quantum analogue of AND, mapping $\ket{\texttt{1}, \texttt{1}, x} \mapsto \ket{\texttt{1}, \texttt{1}, \texttt{1} - x}$ only if the first two qubits are \texttt{1}.

\paragraph{Measurement}
A \emph{measurement} probabilistically collapses the superposition of a quantum state into a classical outcome.
When a qubit $\lambda_0 \ket{\texttt{0}} + \lambda_1 \ket{\texttt{1}}$ is measured in the standard basis, the observed classical outcome is $\texttt{0}$ with probability $\abs{\lambda_0}^2$ and $\texttt{1}$ with probability $\abs{\lambda_1}^2$.

A composite state is \emph{entangled} when it cannot be written as a tensor product of its components.
The \emph{Bell state} $\smash{\frac{1}{\sqrt{2}}}(\ket{\texttt{00}} + \ket{\texttt{11}})$ is entangled, since it cannot be written as a product of two separate qubits.
Given an entangled state, measuring one of its components causes the superposition of the other component to also collapse.
For example, measuring the second qubit in the Bell state causes the first to also collapse, to either $\ket{\texttt{0}}$ or $\ket{\texttt{1}}$ with probability $\big\lvert\frac{1}{\sqrt{2}}\big\rvert{}^2 = \frac{1}{2}$ each.

\subsection{Quantum Computational Linear Algebra} \label{sec:qcla}

The essence of quantum algorithms for linear algebra is the ability of a quantum state or unitary operator to encode an exponentially large vector or matrix respectively:

\begin{definition} \label{def:amplitude-encoding}
  Given a vector $x \in \mathbb{R}^N$ with $n = \log N$, its \emph{amplitude encoding} is the $n$-qubit state
  \[
    \ket{x} = \frac{1}{\norm{x}_2} \sum_{j=0}^{N-1} x_j \ket{j}.
  \]
\end{definition}

For example, two numbers can be encoded into the amplitudes of one qubit via the $R_y(\theta)$ gate, which takes $\ket{\texttt{0}} \mapsto \cos(\theta/2)\ket{\texttt{0}} + \sin(\theta/2)\ket{\texttt{1}}$.
Efficient circuits to precisely encode vectors with higher dimension are an active area of research. General techniques applicable to our work include rotation trees with lookup tables \citep{low2024} and alias sampling \citep{babbush2018}.

\begin{definition} \label{def:block-encoding}
  Given a matrix $A \in \mathbb{R}^{N \times N}$, a \emph{block encoding} of $A$ is any $(m+n)$-qubit unitary
  \[
    \mathcal{B}[A] =
    \begin{bmatrix}
      A / \alpha & \cdot \\
      \cdot & \cdot
    \end{bmatrix},
  \]
  where the top left block of the matrix is $A$ rescaled by a \emph{subnormalization} $\alpha$ satisfying $\alpha \ge \norm{A}_2$, the spectral norm of $A$. The remaining blocks can be arbitrary as long as $\mathcal{B}[A]$ is unitary.
  Intuitively, we embed a possibly non-unitary $A$ into any unitary $\mathcal{B}[A]$ representable by a quantum circuit.
\end{definition}

\begin{example}
  A unitary matrix $U$ has itself as a block encoding $\mathcal{B}[U] = U$, using $m = 0$ additional qubits and $\alpha = 1$.
  In general, non-unitary matrices require $m \ge 1$ qubits to encode. For example,
  \[
    \mathcal{B}[A] =
    \begin{bmatrix}
      A & \sqrt{I - A^2} \\
      \sqrt{I - A^2} & -A
    \end{bmatrix}
  \]
  is unitary for Hermitian $A$ with $\norm{A}_2 \le 1$, using $m = 1$ qubit. More qubits are needed in practice.
\end{example}

A unitary operator that block-encodes a matrix $A$ acts on a state that amplitude-encodes a vector $x$ by matrix-vector multiplication $Ax$.
Applying $\mathcal{B}[A]$ to $\ket{x}$ alongside $m$ copies of $\ket{\texttt{0}}$ yields
\begin{align} \label{eq:block-encoding}
\mathcal{B}[A] \left(\ket{\texttt{0}}\!^{\otimes m} \ket{x}\right) = \frac{\norm{Ax}_2}{\alpha}\ket{\texttt{0}}\!^{\otimes m} \ket{A x} + \ket{\bot},
\end{align}
where $\ket{Ax}$ is a normalized amplitude encoding of $Ax$ not scaled by $\alpha$. An algorithm such as that of \citet{harrow2009} can read out from this vector desired information such as its inner products.

\paragraph{Subnormalization}
The state $\ket{\bot}$ is an undesirable failure case. Extracting $\ket{Ax}$ from the superposition requires \emph{post-selection}: measuring the $m$ temporary qubits, accepting if all yield $\ket{\texttt{0}}$, and starting over otherwise.
Using amplitude amplification techniques \citep{berry2014b}, the expected number of rounds until success is $O(\alpha)$. Running time is thus proportional to subnormalization.

\paragraph{Matrix Arithmetic}
Given $\mathcal{B}[A]$ and $\mathcal{B}[B]$, circuits are known that construct $\mathcal{B}[A + B]$, $\mathcal{B}[A \cdot B]$, $\mathcal{B}[A \otimes B]$, and other arithmetic operators \citep{gilyen2019}. For exposition, we describe next the construction of \citet{childs2012} to block-encode linear combinations of matrices.

\begin{definition} \label{def:lcu}
  Given $(m+n)$-qubit operators that block-encode $A_j \in \mathbb{R}^{N \times N}$ and the coefficients $\lambda \in \mathbb{R}^L$ where $\ell = \log L$, a block encoding of $\sum_{j=0}^{L-1} \lambda_j A_j$ is the $(\ell+m+n)$-qubit operator
  \begin{align}
    \mathcal{B}\left[\sum_{j=0}^{L-1} \lambda_j A_j\right] &= \big(\textsc{Prepare}^\dagger \otimes I^{\otimes (m+n)}\big) \cdot \textsc{Select} \cdot \big(\textsc{Prepare} \otimes I^{\otimes (m+n)}\big), \label{eq:lcu} \\[-7pt]
    \text{where }\,\textsc{Prepare} \left(\ket{\texttt{0}}^{\otimes \ell}\right) &= \frac{1}{\sqrt{\norm{\lambda}_1}} \sum_{j=0}^{L-1} \sqrt{\abs{\lambda_j}} \ket{j},\, \text{and} \label{eq:prepare} \\[2pt]
    \textsc{Select} \left(\ket{j} \ket{\texttt{0}}^{\otimes m} \ket{x}\right) &= \mathrm{sign}(\lambda_j) \ket{j} \mathcal{B}[A_j] \left(\ket{\texttt{0}}^{\otimes m} \ket{x}\right). \label{eq:select}
  \end{align}
\end{definition}

Reading \Cref{eq:lcu} from right to left, the \textsc{Prepare} operator first creates an amplitude encoding (\Cref{def:amplitude-encoding}) into $\ket{j}$ of the coefficients $\smash{\left[\sqrt{\lambda_0}, \ldots, \sqrt{\lambda_{L-1}}\right]}$. Next, the \textsc{Select} operator chooses one of the $\mathcal{B}[A_j]$ to execute by controlling on the bits of $\ket{j}$. Finally, the inverse of the \textsc{Prepare} operator restores $\ket{j}$ to zero to enable post-selection. The reason for taking square roots of the coefficients is that \textsc{Prepare} and $\textsc{Prepare}^\dagger$ each incur a $\smash{\sqrt{\lambda_j}}$ factor, which multiply to give the desired $\lambda_j$. Negative coefficients $\lambda_j$ are handled by inserting a $Z$ phase flip gate on each corresponding branch.

This method is known as \emph{linear combination of unitaries} (LCU), so named for its original use case. It incurs a subnormalization that scales total runtime by $\alpha = \norm{\lambda}_1 = \sum_j \abs{\lambda_j}$. Negative $\lambda_j$ increase $\alpha$ because the denominator of \Cref{eq:prepare} sums positive squared norms of amplitudes $\smash{\sqrt{\abs{\lambda_j}}}$.

\paragraph{Quantum Singular Value Transformation}
A more general and efficient way to compute polynomials of block-encoded matrices is the \emph{quantum singular value transformation} (QSVT) of \citet{gilyen2019}.
Given a block encoding of a Hermitian matrix $A$ and a degree-$d$ real polynomial $P(x)$ with fixed (even or odd) parity satisfying $\abs{P(x)} \le 1$ for all $x \in [-1, 1]$, QSVT computes the polynomial $P(A)$ via a circuit that applies a sequence of rotation gates interleaved with instances of $\mathcal{B}[A]$:
\begin{equation} \label{eq:qsvt}
\mathcal{B}[P(A)] = e^{i \phi_0 Z_\Pi} \mathcal{B}[A] e^{i \phi_1 Z_\Pi} \cdots e^{i \phi_{d-1} Z_\Pi} \mathcal{B}[A] e^{i \phi_d Z_\Pi},
\end{equation}
in which the $Z_{\Pi}$ limits each phase rotation to the subspace $\ket{\texttt{0}}\!^{\otimes m}$ where all temporary qubits are zero. The phase angles $\phi_j$ are computed from the coefficients of $P(x)$ via a framework known as \emph{quantum signal processing} (QSP) developed by \citet{low2019,martyn2021}.

For conceptual simplicity, this work uses a special case of QSVT for Hermitian matrices $A$, also known as the \emph{quantum eigenvalue transformation}.
Matrix functions such as exponentials are more difficult to define for matrices with non-square dimensions and less well-behaved for non-Hermitian matrices with non-real or nonexistent eigenvalues.
\LangName{} is restricted to Hermitian matrices for these existence guarantees to apply; non-Hermitian cases would require different treatment (e.g.\ dilation to a larger Hermitian operator) and are beyond the immediate scope of this work.

\section{Example}

To illustrate programming in \LangName{} and reasoning about the costs of programs, in this section we walk through how to express and optimize block-encoded matrices for quantum applications. These examples are loosely derived from the algorithms literature. They are specifically chosen to demonstrate the system and preview the more complex applications we evaluate in \Cref{sec:evaluation}.

\subsection{Simulation and Sum Fusion Optimization} \label{sec:simulation-example}
Consider the simulation of a system of particles, such as atoms or photons, that permits two distinct operations on pairs of particles.
The first operation swaps the energy of two particles, while the second excites or suppresses both at once \citep{roth2017}.
In the form of a \emph{Hamiltonian}, the total energy function of the system, the two operations could be expressed as the $4 \times 4$ matrices
\begin{equation*}
A = X \otimes X + Y \otimes Y, \quad B = X \otimes X - Y \otimes Y,
\end{equation*}
where $X = \begin{bsmallmatrix} 0 & 1 \\ 1 & 0 \end{bsmallmatrix}$ and $Y = \smash{\begin{bsmallmatrix} 0 & -i \\ i & 0 \end{bsmallmatrix}}$ coincide with basic single-qubit logic gates.
Then, a system that performs $A$ and $B$ simultaneously with relative intensities $\lambda_A$ and $\lambda_B \in \mathbb{R}$ is a weighted sum
\begin{equation} \label{eq:two-particle}
C = \lambda_A A + \lambda_B B.
\end{equation}

To find how this system evolves over time, a quantum simulation algorithm \citep{low2019} multiplies an initial state vector by a function, such as an exponential, of the matrix $C$. It can then use measurements on the final state to calculate energy or other quantities of interest.

\paragraph{Direct Implementation}
To concretely implement this algorithm as a program, a developer must build a block encoding of the matrix $C$.
The developer can express $C$ in \LangName{} directly according to the mathematical notation above. For parameters $\lambda_A = 1$ and $\lambda_B = 0.3$, the program is:
\begin{lstlisting}
X, Y = Basic("X"), Basic("Y")
A = kron(X, X) + kron(Y, Y)
B = kron(X, X) - kron(Y, Y)
C = A + 0.3 * B
\end{lstlisting}

\begin{figure}
\tikzsetfigurename{two-particle}
\scalebox{0.73}{\begin{quantikz}
\lstick{$\ket{\texttt{0}}$} & \gate{R_y(2\cos^{-1} \sqrt{1/1.3})} & \octrl{1} & \octrl{3} & \octrl{3} & \octrl{1} & \ctrl{1} & \ctrl{1} & \ctrl{3} & \ctrl{3} & \ctrl{1} & \gate{R_y(-2\cos^{-1} \sqrt{1/1.3})} & \ground{} \rstick{$\ket{\texttt{0}}$} \\
\lstick{$\ket{\texttt{0}}$} & \qw & \gate{H}\gategroup[3, steps=4, style={dashed,gray,rounded corners,inner xsep=2pt}, label style={label position=south east,anchor=south east,yshift=-0.22cm}]{$\textcolor{gray}{\mathcal{B}[A]}$} & \octrl{2} & \ctrl{2} & \gate{H} & \gate{H}\gategroup[3, steps=5, style={dashed,gray,rounded corners,inner xsep=2pt}, label style={label position=south east,anchor=south east,yshift=-0.22cm}]{$\textcolor{gray}{\mathcal{B}[B]}$} & \gate{Z} & \octrl{2} & \ctrl{2} & \gate{H} & \qw & \ground{} \rstick{$\ket{\texttt{0}}$} \\
\lstick[2]{$\ket{x}$} & \qw & \qw & \gate[style={fill=orange!20}]{X} & \gate[style={fill=orange!20}]{Y} & \qw & \qw & \qw & \gate[style={fill=orange!20}]{X} & \gate[style={fill=orange!20}]{Y} & \qw & \qw & \rstick[2]{$\ket{Cx}$} \\
& \qw & \qw & \gate[style={fill=orange!20}]{X} & \gate[style={fill=orange!20}]{Y} & \qw & \qw & \qw & \gate[style={fill=orange!20}]{X} & \gate[style={fill=orange!20}]{Y} & \qw & \qw & \qw
\end{quantikz}}

\caption{Quantum circuit to directly implement the system in \Cref{eq:two-particle}, as produced by \LangName{}. The gates in \oracle{orange} are queries to $\mathcal{B}[X] = \oracle{X}$ and $\mathcal{B}[Y] = \oracle{Y}$. Filled $\bullet$ denotes control on $\ket{\texttt{1}}$ while hollow $\circ$ denotes control on $\ket{\texttt{0}}$. Notation $\smash{\begin{quantikz}\qw & \ground{}\end{quantikz}}$ denotes post-selection: measuring  a qubit and starting over until $\ket{\texttt{0}}$ is observed.} \label{fig:two-particle}

\vspace*{1.25ex}%
\hspace*{-1.15em}
\begin{minipage}{0.39\textwidth}
\begin{lstlisting}[language=Python,basicstyle=\fontsize{5}{5}\selectfont\ttfamily,numbers=none,morekeywords={MCMTGate,XGate,YGate,HGate,QuantumCircuit,ry,ch,cz}]
c = QuantumCircuit(4)
c.ry(2 * arccos(sqrt(1 / 1.3)), 3)
c.append(HGate().control(1, ctrl_state=0), [3, 2])
c.append(MCMTGate(XGate(),2,2,ctrl_state='00'),[3,2,1,0])
c.append(MCMTGate(YGate(),2,2,ctrl_state='10'),[3,2,1,0])
c.append(HGate().control(1, ctrl_state=0), [3, 2])
c.ch(3, 2)
c.cz(3, 2)
c.append(MCMTGate(XGate(),2,2,ctrl_state='01'),[3,2,1,0])
c.append(MCMTGate(YGate(),2,2,ctrl_state='11'),[3,2,1,0])
c.ch(3, 2)
c.ry(-2 * arccos(sqrt(1 / 1.3)), 3)
\end{lstlisting}
\end{minipage}\hspace{0.3em}
\begin{minipage}{0.62\textwidth}
\tikzsetfigurename{two-particle-opt}
\scalebox{0.73}{\begin{quantikz}
\lstick{$\ket{\texttt{0}}$} & \gate{R_y(2\cos^{-1} \sqrt{1.3/2})} & \ctrl[open]{2} & \ctrl{2} & \gate{R_y(-2\cos^{-1} \sqrt{1.3/2})} & \ground{} \rstick{$\ket{\texttt{0}}$} \\
\lstick[2]{$\ket{x}$} & \qw & \gate[style={fill=orange!20}]{X} & \gate[style={fill=orange!20}]{Y} & \qw & \rstick[2]{$\ket{Cx}$} \\
& \qw & \gate[style={fill=orange!20}]{X} & \gate[style={fill=orange!20}]{Y} & \qw & \qw
\end{quantikz}}
\end{minipage}

\setlength{\abovecaptionskip}{5pt}
\hspace*{-0.75em}
\begin{minipage}{0.38\textwidth}
\caption{Qiskit code for \Cref{fig:two-particle}.} \label{fig:two-particle-qiskit}
\end{minipage}
\begin{minipage}{0.60\textwidth}
\caption{Optimized circuit for \Cref{eq:two-particle} after sum fusion.} \label{fig:two-particle-opt}
\end{minipage}
\end{figure}

The \LangName{} compiler translates this program into the circuit in \Cref{fig:two-particle}, whose matrix form is $\mathcal{B}[C]$ with scaled $C$ in the top-left block (\Cref{def:block-encoding}).
Following the LCU method (\Cref{def:lcu}), the circuit first prepares an amplitude encoding of the vector $\smash{\left[\sqrt{1}, \sqrt{0.3}\right]}$ in an \emph{ancilla}, or temporary, qubit.
It uses this ancilla to control sub-circuits for $\mathcal{B}[A]$ and $\mathcal{B}[B]$, derived recursively.
Finally, it reverses the preparation of the ancilla so that post-selection (see \Cref{sec:qcla}) realizes $C$.

Using \LangName{}, the developer need not explicitly list the bit-level rotation and controlled gates in \Cref{fig:two-particle} as they would when constructing a quantum circuit in a language such as Qiskit \citep{javadiabhari2024}, depicted for contrast in \Cref{fig:two-particle-qiskit}. They can instead use mathematical notation to express an application that compiles, behind the scenes, to a circuit to execute on hardware.

\paragraph{Cost of Direct Implementation}

The time complexity of the program is proportional to the number of logic gates in the circuit. Because the precise gate count is implementation-dependent, we may instead count the number of \emph{queries} to the block encodings of the basic matrices that appear in \Cref{eq:two-particle}. Queries are highlighted in \Cref{fig:two-particle}, where they are basic $\oracle{X}$ and $\oracle{Y}$ logic gates.

The time complexity is also proportional to the \emph{subnormalization} $\alpha$ (see \Cref{sec:qcla}) that scales the encoded $C$ and gives the expected number of repetitions of the circuit until its post-selection succeeds. As stated in \Cref{def:lcu}, for the linear combinations $A$ and $B$ we have $\alpha_A = \abs{1} + \abs{1} = 2$ and $\alpha_B = \abs{1} + \abs{-1} = 2$. For $C$, the subnormalization accumulates as $\alpha_C = \alpha_A + 0.3 \cdot \alpha_B = 2.6$.

Overall, to successfully compute the block encoding of $C$, the program must execute 8 queries to $\oracle{X}$ and $\oracle{Y}$ per iteration of the circuit and at least 2.6 iterations in expectation, for a total cost of 20.8 queries.
Using the \LangName{} system, the developer can compute these costs automatically:

\begin{lstlisting}[numbers=none]
>>> C.queries(), C.subnormalization(), C.total_cost()
(8.0, 2.6, 20.8)
\end{lstlisting}

\paragraph{Sum Fusion Optimization}

The key idea of the \emph{sum fusion} optimization is to flatten the nesting of linear combinations containing negative coefficients that cancel out. In the example,
\begin{align*}
C &= 1 \cdot (X \otimes X + Y \otimes Y) + 0.3 \cdot (X \otimes X - Y \otimes Y) \\
&= 1.3 \cdot \oracle{X} \otimes \oracle{X} + 0.7 \cdot \oracle{Y} \otimes \oracle{Y},
\end{align*}
which is equivalent but invokes fewer queries to $\oracle{X}$ and $\oracle{Y}$ and has a lower subnormalization.

Given the original program \texttt{C}, \LangName{} can automatically perform sum fusion and a set of related rewrites to produce a new program with lower cost.
\Cref{fig:two-particle-opt} presents the circuit that the \LangName{} compiler generates after optimization.
The optimized program makes only 4 queries to the basic matrices $\oracle{X}$ and $\oracle{Y}$.
It also has a smaller subnormalization $\abs{1.3} + \abs{0.7} = 2.0$, for an overall reduction of $2.6\times$ in total cost.
Once again, the \LangName{} system derives this information automatically:

\begin{lstlisting}[numbers=none]
>>> C2 = C.optimize(); (C2.queries(), C2.subnormalization(), C2.total_cost())
(4.0, 2.0, 8.0)
\end{lstlisting}

\subsection{Regression and Polynomial Fusion Optimization} \label{sec:regression-example}

Consider a regression analysis of data measured by a quantum sensor, as could be done in quantum-enhanced learning of physical systems \citep{huang2022}.
Denoting the dataset as $A$ and the model as $B$, regression minimizes their error $A - B$\@.
Suppose that for regularization, a quantum regression algorithm \citep{chakraborty2023} adapts Huber-like loss functions \citep{huber1964} that interpolate between linear and squared error and balance terms that exaggerate or dampen the error:
\begin{align}
  f &= (A - B) + \textstyle\frac{1}{2} {(A - B)}^2, \quad g = (A - B) - \textstyle\frac{1}{2} {(A - B)}^2, \nonumber \\
  L &= f \cdot g. \label{eq:loss}
\end{align}

The algorithm computes a block encoding of $L$ to find how well the model describes the sensor data.
It accesses the dataset via a physical process denoted by a black-box unitary operator $\oracle{U_A} = \mathcal{B}[A]$, and the model via a black-box sub-circuit denoted by a unitary operator $\oracle{U_B} = \mathcal{B}[B]$.

\paragraph{Direct Implementation}
Given these black-box inputs, a developer can express $L$ in \LangName{} following the mathematical notation, and automatically obtain the LCU-based circuit in \Cref{fig:qsp-circuit}:
\begin{lstlisting}
A, B = Basic("U_A", num_ancillas=1), Basic("U_B", num_ancillas=1)
f = (A - B) + 1 / 2 * (A - B) ** 2
g = (A - B) - 1 / 2 * (A - B) ** 2
L = f * g
\end{lstlisting}

\begin{figure}
\tikzsetfigurename{qsp-circuit}
\resizebox{\textwidth}{!}{
\begin{quantikz}
\lstick{$\ket{\texttt{0}}$} & \qw & \qw & \qw & \qw & \qw & \qw & \qw & \qw & \qw & \qw & \qw & \qw & \qw & \qw & \qw & \qw & \qw & \qw & \qw & \qw & \targ{} & \qw & \qw & \qw & \qw & \qw & \qw & \qw & \qw & \qw & \qw & \qw & \qw & \qw & \qw & \qw & \qw & \qw & \qw & \qw & \gate{X} & \ground{} \rstick{$\ket{\texttt{0}}$} \\
\lstick{$\ket{\texttt{0}}$} & \gate{H} & \gate{Z} & \octrl{2} & \octrl{2} & \octrl{3} & \octrl{3} & \octrl{2} & \ctrl{2} & \ctrl{2} & \ctrl{3} & \ctrl{3} & \ctrl{2} & \ctrl{1} & \ctrl{2} & \ctrl{2} & \ctrl{3} & \ctrl{3} & \ctrl{2} & \ctrl{1} & \gate{H} & \octrl{-1} & \gate{H} & \octrl{2} & \octrl{2} & \octrl{3} & \octrl{3} & \octrl{2} & \ctrl{2} & \ctrl{2} & \ctrl{3} & \ctrl{3} & \ctrl{2} & \ctrl{1} & \ctrl{2} & \ctrl{2} & \ctrl{3} & \ctrl{3} & \ctrl{2} & \ctrl{1} & \gate{H} & \qw & \ground{} \rstick{$\ket{\texttt{0}}$} \\
\lstick{$\ket{\texttt{0}}$} & \qw & \qw & \qw & \qw & \qw & \qw & \qw & \qw & \qw & \qw & \qw & \qw & \targ{} & \qw & \qw & \qw & \qw & \qw & \targ{} & \qw & \octrl{-2} & \qw & \qw & \qw & \qw & \qw & \qw & \qw & \qw & \qw & \qw & \qw & \targ{} & \qw & \qw & \qw & \qw & \qw & \targ{} & \qw & \qw & \ground{} \rstick{$\ket{\texttt{0}}$} \\
\lstick{$\ket{\texttt{0}}$} & \qw & \qw & \gate{H} & \gate{Z} & \octrl{1} & \ctrl{1} & \gate{H} & \gate{H} & \gate{Z} & \octrl{1} & \ctrl{1} & \gate{H} & \octrl{-1} & \gate{H} & \gate{Z} & \octrl{1} & \ctrl{1} & \gate{H} & \qw & \qw & \octrl{-3} & \qw & \gate{H} & \gate{Z} & \octrl{1} & \ctrl{1} & \gate{H} & \gate{H} & \gate{Z} & \octrl{1} & \ctrl{1} & \gate{H} & \octrl{-1} & \gate{H} & \gate{Z} & \octrl{1} & \ctrl{1} & \gate{H} & \qw & \qw & \qw & \ground{} \rstick{$\ket{\texttt{0}}$} \\
\lstick{$\ket{\texttt{0}}$} & \qw & \qw & \qw & \qw & \gate[wires=2,style={fill=orange!20}]{U_A} & \gate[wires=2,style={fill=orange!20}]{U_B} & \qw & \qw & \qw & \gate[wires=2,style={fill=orange!20}]{U_A} & \gate[wires=2,style={fill=orange!20}]{U_B} & \qw & \octrl{-2} & \qw & \qw & \gate[wires=2,style={fill=orange!20}]{U_A} & \gate[wires=2,style={fill=orange!20}]{U_B} & \qw & \qw & \qw & \octrl{-4} & \qw & \qw & \qw & \gate[wires=2,style={fill=orange!20}]{U_A} & \gate[wires=2,style={fill=orange!20}]{U_B} & \qw & \qw & \qw & \gate[wires=2,style={fill=orange!20}]{U_A} & \gate[wires=2,style={fill=orange!20}]{U_B} & \qw & \octrl{-2} & \qw & \qw & \gate[wires=2,style={fill=orange!20}]{U_A} & \gate[wires=2,style={fill=orange!20}]{U_B} & \qw & \qw & \qw & \qw & \ground{} \rstick{$\ket{\texttt{0}}$} \\
\lstick{$\ket{x}$} & \qw & \qw & \qw & \qw &  &  & \qw & \qw & \qw &  &  & \qw & \qw & \qw & \qw &  &  & \qw & \qw & \qw & \qw & \qw & \qw & \qw &  &  & \qw & \qw & \qw &  &  & \qw & \qw & \qw & \qw &  &  & \qw & \qw & \qw & \qw & \rstick{$\ket{Lx}$}
\end{quantikz}}
\caption{Quantum circuit to directly implement the function in \Cref{eq:loss}, as produced by \LangName{}. The gates in \oracle{orange} are queries to the black-box operators $\mathcal{B}[A] = \oracle{U_A}$ and $\mathcal{B}[B] = \oracle{U_B}$, assumed to use one ancilla.} \label{fig:qsp-circuit}
\end{figure}

\paragraph{Sum Fusion}
Next, the developer can rewrite the program using sum fusion. Distributing the product and collecting like terms results in a new expression, written compactly as:
\begin{align*}
  L &= \left((A - B) + \textstyle\frac{1}{2} {(A - B)}^2\right) \left((A - B) - \textstyle\frac{1}{2} {(A - B)}^2\right) = {(\oracle{A} - \oracle{B})}^2 - \textstyle\frac{1}{4}{(\oracle{A} - \oracle{B})}^4.
\end{align*}

\Cref{fig:qsp-sum-opt} depicts the circuit to which this expression compiles. Notably, sum fusion alone yields little benefit --- the new circuit is as complex as \Cref{fig:qsp-circuit}, despite the new $L$ being more concise.
The reason is that for general $C$ and $D$, the circuit to block-encode $C \cdot D$ sequences the circuits that encode $C$ and $D$\@.
\Cref{fig:qsp-sum-opt} sequences $\oracle{A} - \oracle{B}$ with itself $2 + 4 = 6$ times for 12 queries in total.

Unlike classical matrix expressions that can be evaluated to a numerical value and cached for reuse, there is no structure-independent mechanism to evaluate a sub-circuit in a quantum block encoding into a form that can be reused cheaply later.
Each instance of a matrix must be physically represented by a unitary operator encoding that matrix, whose cost is the same everywhere.

\begin{figure}
\tikzsetfigurename{qsp-sum-opt}
\resizebox{\textwidth}{!}{
\begin{quantikz}
\lstick{$\ket{\texttt{0}}$} & \gate{H} & \gate{Z} & \octrl{3} & \octrl{3} & \octrl{4} & \octrl{4} & \octrl{3} & \octrl{2} & \octrl{3} & \octrl{3} & \octrl{4} & \octrl{4} & \octrl{3} & \octrl{2} & \ctrl{3} & \ctrl{3} & \ctrl{4} & \ctrl{4} & \ctrl{3} & \ctrl{1} & \ctrl{2} & \ctrl{3} & \ctrl{3} & \ctrl{4} & \ctrl{4} & \ctrl{3} & \ctrl{1} & \ctrl{2} & \ctrl{3} & \ctrl{3} & \ctrl{4} & \ctrl{4} & \ctrl{3} & \ctrl{1} & \ctrl{2} & \ctrl{3} & \ctrl{3} & \ctrl{4} & \ctrl{4} & \ctrl{3} & \ctrl{2} & \ctrl{1} & \gate{H} & \ground{} \rstick{$\ket{\texttt{0}}$} \\
\lstick{$\ket{\texttt{0}}$} & \qw & \qw & \qw & \qw & \qw & \qw & \qw & \qw & \qw & \qw & \qw & \qw & \qw & \qw & \qw & \qw & \qw & \qw & \qw & \targ{} & \qw & \qw & \qw & \qw & \qw & \qw & \targ{} & \qw & \qw & \qw & \qw & \qw & \qw & \targ{} & \qw & \qw & \qw & \qw & \qw & \qw & \qw & \targ{} & \qw & \ground{} \rstick{$\ket{\texttt{0}}$} \\
\lstick{$\ket{\texttt{0}}$} & \qw & \qw & \qw & \qw & \qw & \qw & \qw & \targ{} & \qw & \qw & \qw & \qw & \qw & \targ{} & \qw & \qw & \qw & \qw & \qw & \ctrl{-1} & \targ{} & \qw & \qw & \qw & \qw & \qw & \ctrl{-1} & \targ{} & \qw & \qw & \qw & \qw & \qw & \ctrl{-1} & \targ{} & \qw & \qw & \qw & \qw & \qw & \targ{} & \qw & \qw & \ground{} \rstick{$\ket{\texttt{0}}$} \\
\lstick{$\ket{\texttt{0}}$} & \qw & \qw & \gate{H} & \gate{Z} & \octrl{1} & \ctrl{1} & \gate{H} & \octrl{-1} & \gate{H} & \gate{Z} & \octrl{1} & \ctrl{1} & \gate{H} & \qw & \gate{H} & \gate{Z} & \octrl{1} & \ctrl{1} & \gate{H} & \octrl{-2} & \octrl{-1} & \gate{H} & \gate{Z} & \octrl{1} & \ctrl{1} & \gate{H} & \octrl{-2} & \octrl{-1} & \gate{H} & \gate{Z} & \octrl{1} & \ctrl{1} & \gate{H} & \octrl{-2} & \octrl{-1} & \gate{H} & \gate{Z} & \octrl{1} & \ctrl{1} & \gate{H} & \qw & \qw & \qw & \ground{} \rstick{$\ket{\texttt{0}}$} \\
\lstick{$\ket{\texttt{0}}$} & \qw & \qw & \qw & \qw & \gate[wires=2,style={fill=orange!20}]{U_A} & \gate[wires=2,style={fill=orange!20}]{U_B} & \qw & \octrl{-2} & \qw & \qw & \gate[wires=2,style={fill=orange!20}]{U_A} & \gate[wires=2,style={fill=orange!20}]{U_B} & \qw & \qw & \qw & \qw & \gate[wires=2,style={fill=orange!20}]{U_A} & \gate[wires=2,style={fill=orange!20}]{U_B} & \qw & \octrl{-3} & \octrl{-2} & \qw & \qw & \gate[wires=2,style={fill=orange!20}]{U_A} & \gate[wires=2,style={fill=orange!20}]{U_B} & \qw & \octrl{-3} & \octrl{-2} & \qw & \qw & \gate[wires=2,style={fill=orange!20}]{U_A} & \gate[wires=2,style={fill=orange!20}]{U_B} & \qw & \octrl{-3} & \octrl{-2} & \qw & \qw & \gate[wires=2,style={fill=orange!20}]{U_A} & \gate[wires=2,style={fill=orange!20}]{U_B} & \qw & \qw & \qw & \qw & \ground{} \rstick{$\ket{\texttt{0}}$} \\
\lstick{$\ket{x}$} & \qw & \qw & \qw & \qw &  &  & \qw & \qw & \qw & \qw &  &  & \qw & \qw & \qw & \qw &  &  & \qw & \qw & \qw & \qw & \qw &  &  & \qw & \qw & \qw & \qw & \qw &  &  & \qw & \qw & \qw & \qw & \qw &  &  & \qw & \qw & \qw & \qw & \rstick{$\ket{Lx}$}
\end{quantikz}}
\caption{Intermediate circuit that \LangName{} produces for \Cref{eq:loss} after sum fusion.} \label{fig:qsp-sum-opt}
\end{figure}

\begin{figure}
\tikzsetfigurename{qsp-opt}
\resizebox{\textwidth}{!}{
\begin{quantikz}
\lstick{$\ket{\texttt{0}}$} & \gate{H} & \targ{} & \gate{R_z(-5.50)} & \targ{} & \qw & \qw & \qw & \qw & \qw & \targ{} & \gate{R_z(\pi)} & \targ{} & \qw & \qw & \qw & \qw & \qw & \targ{} & \gate{R_z(1.57)} & \targ{} & \qw & \qw & \qw & \qw & \qw & \targ{} & \gate{R_z(\pi)} & \targ{} & \qw & \qw & \qw & \qw & \qw & \targ{} & \gate{R_z(0.79)} & \targ{} & \gate{H} & \ground{} \rstick{$\ket{\texttt{0}}$} \\
\lstick{$\ket{\texttt{0}}$} & \qw & \octrl{-1} & \qw & \octrl{-1} & \gate{H} & \gate{Z} & \octrl{1} & \ctrl{1} & \gate{H} & \octrl{-1} & \qw & \octrl{-1} & \gate{H} & \gate{Z} & \octrl{1} & \ctrl{1} & \gate{H} & \octrl{-1} & \qw & \octrl{-1} & \gate{H} & \gate{Z} & \octrl{1} & \ctrl{1} & \gate{H} & \octrl{-1} & \qw & \octrl{-1} & \gate{H} & \gate{Z} & \octrl{1} & \ctrl{1} & \gate{H} & \octrl{-1} & \qw & \octrl{-1} & \qw & \ground{} \rstick{$\ket{\texttt{0}}$} \\
\lstick{$\ket{\texttt{0}}$} & \qw & \octrl{-2} & \qw & \octrl{-2} & \qw & \qw & \gate[wires=2,style={fill=orange!20}]{U_A} & \gate[wires=2,style={fill=orange!20}]{U_B} & \qw & \octrl{-2} & \qw & \octrl{-2} & \qw & \qw & \gate[wires=2,style={fill=orange!20}]{U_A} & \gate[wires=2,style={fill=orange!20}]{U_B} & \qw & \octrl{-2} & \qw & \octrl{-2} & \qw & \qw & \gate[wires=2,style={fill=orange!20}]{U_A} & \gate[wires=2,style={fill=orange!20}]{U_B} & \qw & \octrl{-2} & \qw & \octrl{-2} & \qw & \qw & \gate[wires=2,style={fill=orange!20}]{U_A} & \gate[wires=2,style={fill=orange!20}]{U_B} & \qw & \octrl{-2} & \qw & \octrl{-2} & \qw & \ground{} \rstick{$\ket{\texttt{0}}$} \\
\lstick{$\ket{x}$} & \qw & \qw & \qw & \qw & \qw & \qw &  &  & \qw & \qw & \qw & \qw & \qw & \qw &  &  & \qw & \qw & \qw & \qw & \qw & \qw &  &  & \qw & \qw & \qw & \qw & \qw & \qw &  &  & \qw & \qw & \qw & \qw & \qw & \rstick{$\ket{x}$}
\end{quantikz}}
\caption{Optimized circuit that \LangName{} produces for \Cref{eq:loss} after sum and polynomial fusion.} \label{fig:qsp-opt}
\end{figure}

\paragraph{Polynomial Fusion}
To make progress, \LangName{} exploits the structure of the repeated terms in a matrix polynomial.
First, it invokes a \emph{polynomial fusion} transformation that merges monomials with the same base and collects polynomial coefficients across the program.
Given the program \texttt{L}, \LangName{} uses the symbolic term $\textsf{Poly}(X, [a_0, \ldots, a_n])$ to denote the polynomial $\sum_{j=0}^n a_j X^j$:
\begin{lstlisting}[numbers=none]
>>> f.optimize(), g.optimize()
(Poly((A - B), [0.0, 1.0, 0.5]), Poly((A - B), [0.0, 1.0, -0.5]))
>>> (f * g).optimize()  # == L.optimize()
Poly((A - B), [0.0, 0.0, 1.0, 0.0, -0.25])
\end{lstlisting}

This list of polynomial coefficients is the input required by the quantum singular value transformation (QSVT, \Cref{eq:qsvt}), which can give efficient circuits for suitable polynomials.
\LangName{} first checks and re-scales the coefficients to satisfy the preconditions of \Cref{eq:qsvt}. It next invokes a solver to compute the QSVT rotation angles. Finally, it produces the circuit in \Cref{fig:qsp-opt}.
Because $L$ has degree 4 and even parity, this circuit makes 4 queries to $\oracle{A} - \oracle{B}$ for a total of 8 queries.

The biggest gain, moreover, is hidden. In the LCU-based implementation from \Cref{fig:qsp-circuit}, subnormalization is the value of the un-simplified $L$ with absolute values taken for coefficients:
\[
  \alpha_L^{\text{LCU}} = \left((\abs{\alpha_A} + \abs{-\alpha_B}) + \abs*{\textstyle\frac{1}{2}}{(\abs{\alpha_A} + \abs{-\alpha_B})}^2\right)\left((\abs{\alpha_A} + \abs{-\alpha_B}) + \abs*{-\textstyle\frac{1}{2}}{(\abs{\alpha_A} + \abs{-\alpha_B})^2}\right) = 16,
\]
assuming $\alpha_A = \alpha_B = 1$.
With the QSVT in \Cref{fig:qsp-opt}, it is the \emph{maximum norm} of the simplified $L$:
\[
  \alpha_L^{\text{QSVT}} = \max_{-1 \le x \le 1}\: \abs*{{((\alpha_A + \alpha_B) x)}^2 - \textstyle\frac{1}{4}{((\alpha_A + \alpha_B) x)}^4} = 1,
\]
as \LangName{} computes automatically. Total cost reduces from $12 \times 16 = 192$ to $8 \times 1 = 8$, a $24\times$ speedup.

\section{Language} \label{sec:language}

In this section, we present the \LangName{} language.
First, we formalize a core syntax of mathematical operators to manipulate block-encoded matrices, along with its type system and semantics.
We then extend this core with a symbolic term that enables the polynomial fusion optimization.

\subsection{Core Syntax} \label{sec:core-language}

The core syntax of \LangName{} consists of arithmetic operators over block-encoded matrices:
\begin{align*}
    \tau \Coloneqq{} & \texttt{bool} \mid \tau_1 \otimes \tau_2 \\
    M \Coloneqq{} & \mathcal{B}[A] \mid M^\dagger \mid \lambda_1 M_1 + \lambda_2 M_2 \mid M_1 \cdot M_2 \mid M_1 \oplus M_2 \mid M_1 \otimes M_2 \qquad \abs{\lambda_1} + \abs{\lambda_2} > 0
\end{align*}
where the condition $\abs{\lambda_1} + \abs{\lambda_2} > 0$ prevents division by zero in the denominator of \Cref{eq:prepare}.

\paragraph{Types}
\LangName{} has Booleans and tensor products, where $\texttt{bool}^{\otimes n}$ denotes a tuple of $n$ bits. Semantically, an expression has type $\texttt{bool}^{\otimes n}$ if it encodes an $n$-qubit matrix of dimension $2^n \times 2^n$. Cobble's type system enforces conditions needed for physical realizability of the generated circuit, such as hermiticity for QSVT (\Cref{sec:symbolic-polynomials}) and consistent subnormalization for direct sums (\Cref{sec:core-language-costs}).

\paragraph{Expressions}
The expression $\mathcal{B}[A]$ denotes a black-box block encoding of the matrix $A$.
When $A$ is a standard named gate (e.g., $X$, $Y$), the compiler instantiates $\mathcal{B}[A]$ as that gate with no user action. When $A$ is a compound expression (e.g., $B + C$), the compiler recursively decomposes $\mathcal{B}[A]$ into subexpressions. When $A$ is a basic matrix that does not correspond to a single standard gate (e.g., $\oracle{U_A}$ and $\oracle{U_B}$ in \Cref{sec:regression-example}), \LangName{} expects the user to provide the circuit implementing $\mathcal{B}[A]$. Otherwise, the compiler emits a circuit with unevaluated gate names for the user to define.

Other expressions include adjoints, sums, products, and tensor products of block encodings. The choice operator $\oplus$ denotes a \emph{direct sum} of $M_1$ and $M_2$ that encodes the two matrices in subspaces distinguished by a Boolean, analogous to a conventional sum type or \texttt{if}-expression.

For clarity, the formal syntax and semantics here present binary ($+$, $\,\cdot\,$) rather than $n$-ary ($\sum$, $\prod$) versions of arithmetic operators.
The full version of \LangName{} implemented and studied in subsequent sections provides generalizations to $n$-ary operators, which we briefly describe below.

\paragraph{Type System}

\Cref{fig:core-types} presents the typing rules for the core language of \LangName{}.
A black-box block encoding of a matrix with dimension $2^n \times 2^n$ has type $\texttt{bool}^{\otimes n}$.
The adjoint of an expression has the same type as the original expression. A sum or product of expressions has the type of the summands or factors, provided they have the same type. A direct sum has a Boolean discriminator followed by the type of the summands, subject to a side condition defined in \Cref{sec:core-language-costs} stating that the summands have equal subnormalization. Finally, a tensor product has product type.

\begin{figure}
{\centering\hfuzz=9999pt\resizebox{\textwidth}{!}{\parbox{1.12\textwidth}{%
\begin{mathpar}
\inferrule[T-Base]{A \in \mathbb{R}^{2^n \times 2^n}}{\mathcal{B}[A] : \texttt{bool}^{\otimes n}} \quad
\inferrule[T-Adj]{M : \tau}{M^\dagger : \tau} \quad
\inferrule[T-Sum]{M_1 : \tau \quad M_2 : \tau}{\lambda_1 M_1 + \lambda_2 M_2 : \tau} \quad
\inferrule[T-Product]{M_1 : \tau \quad M_2 : \tau}{M_1 \cdot M_2 : \tau} \quad
\inferrule[T-Choice]{M_1 : \tau \quad M_2 : \tau \quad \alpha_1 = \alpha_2}{M_1 \oplus M_2 : \texttt{bool} \otimes \tau} \quad
\inferrule[T-Tensor]{M_1 : \tau_1 \quad M_2 : \tau_2}{M_1 \otimes M_2 : \tau_1 \otimes \tau_2}
\end{mathpar}%
}}}
\caption{Type system of the core language of \LangName{}. The values $\lambda_1$, $\lambda_2$ are real scalar literals. In \textsc{T-Choice}, $\alpha_1$ and $\alpha_2$ are the subnormalizations of $M_1$ and $M_2$; the side condition $\alpha_1 = \alpha_2$ is defined in \Cref{sec:core-language-costs}.} \label{fig:core-types}
\end{figure}

\subsection{Semantics} \label{sec:denotational-semantics}

Each well-typed program has an abstract denotational semantics giving the matrix encoded by the program and a concrete compilation semantics giving the circuit that realizes the program.

\paragraph{Denotational Semantics}
\newcommand{\matsem}[1]{\llbracket {#1} \rrbracket}
The denotation $\matsem{M}$ is the matrix $M$ encodes, up to subnormalization:
\begin{alignat*}{2}
    \matsem{\mathcal{B}[A]} &= A
    &\qquad
    \matsem{M^\dagger} &= \matsem{M}^\dagger \\
    \matsem{\lambda_1 M_1 + \lambda_2 M_2} &= \lambda_1 \matsem{M_1} + \lambda_2 \matsem{M_2}
    &\qquad
    \matsem{M_1 \cdot M_2} &= \matsem{M_1} \cdot \matsem{M_2} \\
    \matsem{M_1 \oplus M_2} &= \matsem{M_1} \oplus \matsem{M_2}
    &\qquad
    \matsem{M_1 \otimes M_2} &= \matsem{M_1} \otimes \matsem{M_2}
\end{alignat*}
where $\otimes$ is the Kronecker product and $\oplus$ is the direct sum of matrices, $A \oplus B = \begin{bsmallmatrix} A & 0 \\ 0 & B \end{bsmallmatrix}$.

\paragraph{Compilation Semantics}
\newcommand{\circsem}[1]{\llparenthesis {#1} \rrparenthesis}
The circuit $\circsem{M}$ is the sequence of logic gates that realizes $M$ in hardware.
This circuit operates over two registers, following \Cref{eq:block-encoding}: the $n$-qubit data vector $\ket{x}$ to be multiplied by the encoded matrix and the $m$-qubit ancilla $\ket{a}$ to be post-selected to all zeroes.

\Cref{fig:core-circuit} presents the circuit for each operator.
The circuit for a black-box block encoding is the block encoding itself, and the circuit for the adjoint of $M$ is the adjoint of the circuit for $M$.

The circuit for addition uses the LCU method of \citet{childs2012} given in \Cref{def:lcu}. To prepare a superposition of the two branches, the circuit performs a rotation on one ancilla qubit by the angle $\theta = 2\cos^{-1} \smash{\sqrt{\abs{\lambda_1} / (\abs{\lambda_1} + \abs{\lambda_2})}}$. The $n$-ary case would use $\lceil \log n \rceil$ ancilla qubits and a set of controlled rotations. For $\lambda_j < 0$, the circuit adds a $Z$ gate controlled on that branch.

The circuit for multiplication sequentially executes each factor in the conventional reverse order. Following \citet{sunderhauf2023,dalzell2025}, it reuses ancillas between factors and adds one ancilla $\ket{a_0}$ to ensure that all intermediate ancilla states after each factor are post-selected to zero. The $n$-ary case would add $\lceil \log n \rceil$ ancillas, replace the anti-controlled NOT between factors by an anti-controlled integer increment, and replace the final NOT with a subtraction by $n - 1$.

The circuit for direct sum selects between the circuits for the branches based on the discriminator bit $\ket{x_0}$. It parallels the case for ordinary sum but considers the discriminator as part of the data, which will not be post-selected, rather than the ancilla, which will.
Finally, the circuit for tensor product independently executes a circuit for each factor on its data and ancilla components.

\begin{figure}
\captionsetup[subfigure]{labelformat=empty}
\captionsetup{justification=centering}
\centering
\begin{subfigure}[b]{.2\textwidth}%
\centering%
\scalebox{0.75}{%
\tikzsetfigurename{circsem-base}%
\begin{quantikz}[classical gap=0.075cm, row sep={0.85cm, between origins}]
\lstick{$\ket{a}$} \setwiretype{b} & \gate[2]{\vphantom{{}^\dagger}\mathcal{B}[A]} & \rstick{$\ket{a'}$} \\
\lstick{\hphantom{'}$\ket{x}$} \setwiretype{b} & & \rstick{$\ket{x'}$}
\end{quantikz}%
}%
\caption{$\vphantom{{}^\dagger}\circsem{\mathcal{B}[A]}$}%
\end{subfigure}\quad%
\begin{subfigure}[b]{.2\textwidth}%
\centering%
\scalebox{0.75}{%
\tikzsetfigurename{circsem-adjoint}%
\begin{quantikz}[classical gap=0.075cm, row sep={0.85cm, between origins}]
\lstick{$\ket{a}$} \setwiretype{b} & \gate[2]{\circsem{M}^\dagger} & \rstick{$\ket{a'}$} \\
\lstick{\hphantom{'}$\ket{x}$} \setwiretype{b} & & \rstick{$\ket{x'}$}
\end{quantikz}%
}%
\caption{$\circsem{M^\dagger}$}%
\end{subfigure}\ \ %
\begin{subfigure}[b]{.52\textwidth}%
\centering%
\scalebox{0.75}{%
\tikzsetfigurename{circsem-sum}%
\begin{quantikz}[classical gap=0.075cm, row sep={0.85cm, between origins}]
\lstick{$\ket{a_0}$} & \gate{R_y(\theta)}\gategroup[1,steps=1,style={draw=none, inner ysep=0pt}, label style={label position=south,anchor=north}]{\footnotesize\textcolor{gray}{\textsc{Prepare}\vphantom{${}^\dagger$}}} & \octrl{1}\gategroup[2,steps=2,style={draw=none, inner ysep=0pt}, label style={label position=north,anchor=north,yshift=-0.22cm}]{\footnotesize\textcolor{gray}{\textsc{Select}}} & \ctrl{1} & \gate{R_y(-\theta)}\gategroup[1,steps=1,style={draw=none, inner ysep=0pt}, label style={label position=south,anchor=north}]{\footnotesize\textcolor{gray}{\textsc{Prepare}${}^\dagger$}} & \rstick{$\ket{a_0'}$} \\
\lstick{\hspace{1.2em}$\ket{a, x}$} \setwiretype{b} & & \gate{\circsem{M_1}} & \gate{\circsem{M_2}} & & \rstick{$\ket{a', x'}$}
\end{quantikz}%
}%
\caption{$\vphantom{{}^\dagger}\circsem{\lambda_1 M_1 + \lambda_2 M_2}$}%
\end{subfigure}\\[1.75ex]
\begin{subfigure}[b]{.32\textwidth}%
\centering%
\scalebox{0.75}{%
\tikzsetfigurename{circsem-product}%
\begin{quantikz}[classical gap=0.075cm, row sep={0.85cm, between origins}]
\lstick{$\ket{a_0}$} & & \targ{} & \gate{X} & \rstick{$\ket{a_0'}$} \\
\lstick{$\ket{a}$} \setwiretype{b} & \gate[2]{\circsem{M_2}} & \octrl{-1} & \gate[2]{\circsem{M_1}} & \rstick{$\ket{a'}$} \\
\lstick{$\ket{x}$} \setwiretype{b} & & & & \rstick{$\ket{x'}$}
\end{quantikz}%
}%
\caption{$\circsem{M_1 \cdot M_2}$}%
\end{subfigure}\quad\ %
\begin{subfigure}[b]{.28\textwidth}%
\centering%
\scalebox{0.75}{%
\tikzsetfigurename{circsem-choice}%
\begin{quantikz}[classical gap=0.075cm, row sep={0.85cm, between origins}]
\lstick{$\ket{x_0}$} & \octrl{1} & \ctrl{1} & \rstick{$\ket{x_0'}$} \\
\lstick{$\ket{a}$} \setwiretype{b} & \gate[2]{\circsem{M_1}} & \gate[2]{\circsem{M_2}} & \rstick{$\ket{a'}$} \\
\lstick{$\ket{x}$} \setwiretype{b} & & & \rstick{$\ket{x'}$}
\end{quantikz}%
}%
\caption{$\circsem{M_1 \oplus M_2}$}%
\end{subfigure}\quad\ %
\begin{subfigure}[b]{.24\textwidth}%
\centering%
\scalebox{0.75}{%
\tikzsetfigurename{circsem-tensor}%
\begin{quantikz}[classical gap=0.075cm, row sep={0.85cm, between origins}]
\setwiretype{b} \lstick{$\ket{a_1, x_1}$} & \gate{\circsem{M_1}} & \rstick{$\ket{a_1', x_1'}$}\\
\setwiretype{b} \lstick{$\ket{a_2, x_2}$} & \gate{\circsem{M_2}} & \rstick{$\ket{a_2', x_2'}$}
\end{quantikz}%
}%
\caption{$\circsem{M_1 \otimes M_2}$}%
\end{subfigure}
\caption{Compilation semantics of the core language of \LangName{}.} \label{fig:core-circuit}
\end{figure}

\paragraph{Soundness}
The type system is sound with respect to both semantics. Furthermore, the denotational semantics is equal to the sub-matrix in the top-left block of the matrix representation of the compilation semantics, up to rescaling by the subnormalization $\alpha \in \mathbb{R}$ (\Cref{def:block-encoding}).

\begin{theorem}
    If $M: \tau$, then $\matsem{M}$ is a valid matrix and $\circsem{M}$ is a valid quantum circuit.
\end{theorem}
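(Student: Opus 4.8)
The plan is to prove the statement by structural induction on $M$ --- equivalently on the typing derivation $M : \tau$, since the rules of \Cref{fig:core-types} are syntax-directed. The conclusion as stated is too weak to serve as an induction hypothesis directly: to reason about $\circsem{M_1 \cdot M_2}$ one needs to know not merely that $\circsem{M_1}$ and $\circsem{M_2}$ are well-formed circuits but precisely which matrices they block-encode. I would therefore strengthen the claim to the block-encoding invariant of \Cref{def:block-encoding}: if $M : \texttt{bool}^{\otimes n}$ then $\matsem{M} \in \mathbb{R}^{2^n \times 2^n}$ and $\circsem{M}$ is a unitary on $n + m$ qubits (for some $m = m(M) \ge 0$) whose top-left $2^n \times 2^n$ block equals $\matsem{M}/\alpha$, where $\alpha = \alpha(M) > 0$ is the subnormalization of \Cref{sec:core-language-costs} and $\alpha \ge \norm{\matsem{M}}_2$; in short, $\circsem{M} = \mathcal{B}[\matsem{M}]$. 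The induction is thus simultaneous over $\matsem{\cdot}$, $\circsem{\cdot}$, and $\alpha(\cdot)$. The ``valid matrix'' half of the theorem then falls out as an immediate dimension-bookkeeping sub-induction from the denotational equations and the dimension in \textsc{T-Base}; all the real work is in the circuit half.

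Before the induction I would isolate a small algebraic lemma about the ``encoded matrix'' map: write $A(U)$ for the top-left $2^n \times 2^n$ block of a unitary $U$ on an $m$-qubit ancilla and $n$-qubit data register, i.e., the operator obtained by projecting the ancilla onto all-zeros. The lemma records that $A(U^\dagger) = A(U)^\dagger$; that $A(U_1 \otimes U_2) = A(U_1) \otimes A(U_2)$ once the ancilla wires are permuted to the front (a wire permutation is unitary); the analogous composition identity underlying the product gadget; the norm facts $\norm{A^\dagger}_2 = \norm{A}_2$, $\norm{A \otimes B}_2 = \norm{A}_2\norm{B}_2$, $\norm{A \oplus B}_2 = \max(\norm{A}_2, \norm{B}_2)$, and submultiplicativity $\norm{AB}_2 \le \norm{A}_2\norm{B}_2$; and the observation that any block encoding may be padded with a fresh ancilla qubit held at $\ket{\texttt{0}}$ without changing $A(U)$ --- needed so that in each binary operator the two sub-block-encodings may be taken over a common ancilla register.

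With the lemma in hand the induction has one case per rule, each reading off the matching gadget of \Cref{fig:core-circuit}. \textsc{T-Base} is the standing assumption on the user-supplied $\mathcal{B}[A]$. For \textsc{T-Adj}, $\circsem{M^\dagger} = \circsem{M}^\dagger$ is unitary on the same qubits and $A(\circsem{M}^\dagger) = A(\circsem{M})^\dagger = \matsem{M}^\dagger/\alpha = \matsem{M^\dagger}/\alpha$, with $\alpha$ and the norm bound unchanged. For \textsc{T-Tensor}, the circuit is $\circsem{M_1} \otimes \circsem{M_2}$ on disjoint registers; after reindexing ancillas the lemma gives encoded matrix $(\matsem{M_1} \otimes \matsem{M_2})/(\alpha_1\alpha_2)$, the multiplicative norm bound, and dimension $2^{n_1} \cdot 2^{n_2} = 2^{n_1 + n_2}$ matching $\tau_1 \otimes \tau_2$. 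For \textsc{T-Choice}, the circuit applies $\circsem{M_1}$ or $\circsem{M_2}$ controlled on the discriminator qubit $\ket{x_0}$, which --- crucially --- is treated as data rather than ancilla and so is not post-selected; its encoded matrix is the block-diagonal $(\matsem{M_1}/\alpha_1) \oplus (\matsem{M_2}/\alpha_2)$, and this equals $(\matsem{M_1} \oplus \matsem{M_2})/\alpha$ for a single subnormalization $\alpha$ \emph{exactly} when $\alpha_1 = \alpha_2$ --- which is precisely the side condition of \textsc{T-Choice} --- with norm bound $\max(\norm{\matsem{M_1}}_2, \norm{\matsem{M_2}}_2) \le \alpha$.

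The two cases carrying the real content are \textsc{T-Sum} and \textsc{T-Product}, and I expect \textsc{T-Sum} to be the main obstacle. Its circuit is the binary instance of the LCU gadget of \Cref{def:lcu}: a \textsc{Prepare} rotation $R_y(\theta)$ on a fresh ancilla, a \textsc{Select} applying $\circsem{M_j}$ controlled on that ancilla (with a $Z$-correction on a negative branch), and $\textsc{Prepare}^\dagger$. One must check that $\theta$ --- built from $\abs{\lambda_1}, \abs{\lambda_2}, \alpha_1, \alpha_2$ --- is well-defined, which is exactly where positivity of subnormalizations and the grammar's side condition $\abs{\lambda_1} + \abs{\lambda_2} > 0$ are used, and then push the identity \Cref{eq:block-encoding} through \textsc{Select}: each branch contributes its sub-block-encoding's top-left block $\matsem{M_j}/\alpha_j$ weighted by the prepared amplitude and the branch sign, and $\textsc{Prepare}^\dagger$ folds the amplitudes back to leave all-zero-ancilla block $(\lambda_1\matsem{M_1} + \lambda_2\matsem{M_2})/\alpha$ with $\alpha = \abs{\lambda_1}\alpha_1 + \abs{\lambda_2}\alpha_2$; the triangle inequality then gives $\alpha \ge \norm{\lambda_1\matsem{M_1} + \lambda_2\matsem{M_2}}_2$. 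For \textsc{T-Product} the circuit sequences $\circsem{M_2}$ then $\circsem{M_1}$ on a shared ancilla register, with one extra ancilla $\ket{a_0}$, an anti-controlled NOT recording whether the intermediate ancilla returned to all-zeros, and a final NOT so the accepting branch is $\ket{\texttt{0}}$; applying \Cref{eq:block-encoding} twice --- the composition of \citet{sunderhauf2025,dalzell2025} --- shows the encoded matrix is $\matsem{M_1}\matsem{M_2}/(\alpha_1\alpha_2)$, with submultiplicativity of the spectral norm giving the bound, and the $n$-ary generalizations (an increment/subtraction in place of the NOTs, $\lceil\log n\rceil$ ancillas for \textsc{Prepare}) go through identically. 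The delicate part throughout is the ancilla bookkeeping --- tracking which wires are post-selected, padding unequal ancilla registers, and in the product gadget confirming that the intermediate-zero test really isolates the $\ket{Ax}$ term of \Cref{eq:block-encoding} rather than leaking the failure term $\ket{\bot}$ --- together with verifying that the subnormalizations these gadgets produce agree with the definitions of \Cref{sec:core-language-costs} that the \textsc{T-Choice} side condition quantifies over.
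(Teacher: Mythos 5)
Your proposal is correct and takes the same route as the paper, which proves this theorem by structural induction on $M$ (the paper's stated proof is exactly that one line). The strengthened induction hypothesis you introduce --- that $\circsem{M}$ block-encodes $\matsem{M}/\alpha$ in its top-left block --- is precisely what the paper states and proves as its immediately following theorem, so your combined induction subsumes both results at once.
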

\begin{proof}
    By induction on the structure of $M$.
\end{proof}

\begin{theorem}
    Assume that $M: \textup{\texttt{bool}}^{\otimes n}$ and for all $\mathcal{B}[A] : \textup{\texttt{bool}}^{\otimes n_A}$ in $M$, the $2^{n_A} \times 2^{n_A}$ top-left block of $\mathcal{B}[A]$ is $A / \alpha_A$ for some $\alpha_A$. Then, the $2^n \times 2^n$ top-left block of $\circsem{M}$ is $\matsem{M} / \alpha$ for some $\alpha$.
\end{theorem}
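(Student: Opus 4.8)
The plan is to prove the statement by induction on the structure of $M$, but with a slightly strengthened induction hypothesis so that the composite cases actually compose. Concretely, for each well-typed $M : \texttt{bool}^{\otimes n}$ I would carry along that $\circsem{M}$ is a unitary on an $m$-qubit ancilla register and an $n$-qubit data register; that its top-left block (its compression to the subspace in which every ancilla qubit is $\ket{\texttt{0}}$) equals $\matsem{M}/\alpha$ for some scalar $\alpha > 0$; and moreover that $\circsem{M}\bigl(\ket{\texttt{0}}^{\otimes m}\ket{x}\bigr) = \ket{\texttt{0}}^{\otimes m} \otimes (\matsem{M}/\alpha)\ket{x} + \ket{\bot_M}$ with $\ket{\bot_M}$ supported entirely on ancilla basis states orthogonal to $\ket{\texttt{0}}^{\otimes m}$. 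This last clause --- that the component which fails post-selection never leaks back into the zeroed-ancilla subspace --- is the invariant that makes the compositional cases go through, and it is the place where I expect the real work to be.

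The easy cases I would dispatch first. For $\mathcal{B}[A]$ the claim is the hypothesis, with $\alpha = \alpha_A$. For $M^\dagger$, since $\circsem{M^\dagger} = \circsem{M}^\dagger$ and the compression of $U^\dagger$ is the conjugate transpose of the compression of $U$, the top-left block is $(\matsem{M}/\alpha)^\dagger = \matsem{M}^\dagger/\alpha = \matsem{M^\dagger}/\alpha$ since $\alpha$ is real; I keep the same $\alpha$. For $M_1 \otimes M_2$, the circuit of \Cref{fig:core-circuit} runs $\circsem{M_1}$ and $\circsem{M_2}$ on disjoint ancilla/data pairs, so after the relabeling that groups the two ancilla registers together the top-left block is the Kronecker product $(\matsem{M_1}/\alpha_1) \otimes (\matsem{M_2}/\alpha_2) = \matsem{M_1 \otimes M_2}/(\alpha_1\alpha_2)$. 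For $M_1 \oplus M_2$, well-typedness supplies the \textsc{T-Choice} side condition $\alpha_1 = \alpha_2$; since the discriminator qubit belongs to the data (hence is not post-selected), the top-left block is the block-diagonal matrix $\mathrm{diag}(\matsem{M_1}/\alpha_1,\matsem{M_2}/\alpha_2)$, and the side condition is precisely what lets me pull out a single scalar and write this as $(\matsem{M_1} \oplus \matsem{M_2})/\alpha_1$.

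The two interesting cases are sum and product. For $\lambda_1 M_1 + \lambda_2 M_2$ I would unfold the LCU circuit of \Cref{def:lcu}: \textsc{Prepare} rotates the fresh ancilla into $\sqrt{w_1}\ket{\texttt{0}} + \sqrt{w_2}\ket{\texttt{1}}$ with weights chosen to absorb the branch subnormalizations, $w_j \propto \abs{\lambda_j}\alpha_j$ (this absorption is what makes the compression a scalar multiple of $\lambda_1\matsem{M_1}+\lambda_2\matsem{M_2}$ rather than of $\tfrac{\lambda_1}{\alpha_1}\matsem{M_1}+\tfrac{\lambda_2}{\alpha_2}\matsem{M_2}$); \textsc{Select} applies $\mathrm{sign}(\lambda_j)\circsem{M_j}$ on branch $j$; and $\textsc{Prepare}^\dagger$ uncomputes the ancilla. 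Projecting the output onto the fresh ancilla being $\ket{\texttt{0}}$ and then, using the clean-failure invariant of each branch, onto the inner zeroed ancilla, yields $\tfrac{1}{w_1+w_2}\bigl(w_1\,\mathrm{sign}(\lambda_1)\tfrac{\matsem{M_1}}{\alpha_1} + w_2\,\mathrm{sign}(\lambda_2)\tfrac{\matsem{M_2}}{\alpha_2}\bigr)$, which with $w_j = \abs{\lambda_j}\alpha_j$ collapses to $(\lambda_1\matsem{M_1} + \lambda_2\matsem{M_2})/\alpha$ for $\alpha = \abs{\lambda_1}\alpha_1 + \abs{\lambda_2}\alpha_2$, positive by the syntactic condition $\abs{\lambda_1} + \abs{\lambda_2} > 0$. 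For $M_1 \cdot M_2$ the subtlety is that $\circsem{M_1}\circsem{M_2}$ by itself is \emph{not} a block encoding of $\matsem{M_1}\matsem{M_2}$, because the compression of a product of unitaries picks up cross terms in which $\circsem{M_2}$ lands on a nonzero ancilla string that $\circsem{M_1}$ then maps back to zero. So I would trace the extra flag qubit $\ket{a_0}$ of \Cref{fig:core-circuit}: it is flipped to $\ket{\texttt{1}}$ exactly when the shared ancilla is zero after $\circsem{M_2}$, then $\circsem{M_1}$ runs, then $\ket{a_0}$ is flipped back unconditionally. Following the state through --- here the clean-failure invariant for $M_2$ is essential, since it guarantees the failure component of $\circsem{M_2}$ leaves $\ket{a_0} = \ket{\texttt{0}}$ --- the unique component with $\ket{a_0} = \ket{\texttt{0}}$ and shared ancilla zero at the end is the one in which both post-selections succeeded in sequence, giving top-left block $(\matsem{M_1}/\alpha_1)(\matsem{M_2}/\alpha_2) = \matsem{M_1 \cdot M_2}/(\alpha_1\alpha_2)$, while the cross terms are quarantined into $\ket{a_0} = \ket{\texttt{1}}$; the clean-failure clause for the product then follows because both the $\ket{a_0} = \ket{\texttt{1}}$ branch and the residual of $\circsem{M_1}$ are orthogonal to the zeroed full ancilla.

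I expect the product case to be the main obstacle: making the cross-term quarantine argument airtight is exactly what forces the clean-failure strengthening of the induction hypothesis --- knowing only that each branch's success component is correct is not enough, because the flag bookkeeping relies on where each branch's \emph{failure} component lives. A secondary point to get right is the choice of \textsc{Prepare} amplitudes in the sum case, which both pins down the value $\alpha = \abs{\lambda_1}\alpha_1 + \abs{\lambda_2}\alpha_2$ and is where mismatched branch subnormalizations are reconciled, together with the \textsc{T-Choice} side condition $\alpha_1 = \alpha_2$, without which no single scalar $\alpha$ can exist for a direct sum of block encodings with different subnormalizations.
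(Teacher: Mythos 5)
Your proposal is correct and follows essentially the same route as the paper, whose entire proof is ``by induction on the structure of $M$, invoking the cited prior results'': you perform the same structural induction and simply spell out the details that the paper delegates to the LCU, product, and tensor constructions of the cited works (including the correct observation that the \textsc{Prepare} weights must be proportional to $\abs{\lambda_j}\alpha_j$ for the stated denotational semantics to come out, consistent with \Cref{tbl:core-language-costs}). One minor note: your ``clean-failure'' clause is not a genuine strengthening of the induction hypothesis, since for a unitary the component of the output outside the zeroed-ancilla subspace is orthogonal to it by definition, so the standard top-left-block statement already supplies everything your product-case argument uses.
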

\begin{proof}
    By induction on the structure of $M$, and invoking the cited prior results.
\end{proof}

\subsection{Symbolic Polynomials} \label{sec:symbolic-polynomials}
We next extend the core language with a symbolic term in the compiler's intermediate representation that captures matrix polynomials. The compiler rewrites the AST using this term so it can check degree and parity of polynomials and compute QSVT angles to generate a circuit. This term is:
\begin{align*}
    M \Coloneqq{} & \cdots \mid \textsf{Poly}(M, [a_0, \ldots, a_d]) \qquad a_j \in \mathbb{R}
\end{align*}

The denotational semantics of $\textsf{Poly}(M, [a_0, \ldots, a_d])$ is defined to be equal to $\matsem{\smash{\sum_{j=0}^d a_j M^j}}$ in the core language. But when $M$ is Hermitian, i.e.\ $M = M^\dagger$, the polynomial can be compiled to a more efficient circuit using the quantum singular value transformation as defined in \Cref{eq:qsvt}.

\paragraph{Type System}
In \Cref{fig:types-poly}, we present rules that augment the type system to conservatively check hermiticity of $M$. For a black-box block encoding $\mathcal{B}[A]$, \LangName{} requires the user to specify whether $A$ is Hermitian. The adjoint of a Hermitian matrix is Hermitian, as are the sums, direct sums, and tensor products of Hermitian matrices. Products of Hermitian matrices are Hermitian if and only if the factors commute, which \LangName{} also requires the user to specify.
Finally, symbolic polynomials with real coefficients in $M$ are well-typed and Hermitian when $M$ is Hermitian.

\begin{figure}
{\centering\hfuzz=9999pt\resizebox{\textwidth}{!}{\parbox{1.12\textwidth}{%
\begin{mathpar}
\inferrule[H-Base]{\textcolor{gray}{A = A^\dagger}}{\mathcal{B}[A] = (\mathcal{B}[A])^\dagger}

\inferrule[H-Adj]{M = M^\dagger}{M^\dagger = M}

\inferrule[H-Sum]{M_1 = M_1^\dagger \quad M_2 = M_2^\dagger}{M_1 + M_2 = (M_1 + M_2)^\dagger}

\inferrule[H-Product]{\textcolor{gray}{\matsem{M_1} \matsem{M_2} = \matsem{M_2} \matsem{M_1}} \quad M_1 = M_1^\dagger \quad M_2 = M_2^\dagger}{M_1 \cdot M_2 = (M_1 \cdot M_2)^\dagger}

\inferrule[H-Choice]{M_1 = M_1^\dagger \quad M_2 = M_2^\dagger}{M_1 \oplus M_2 = (M_1 \oplus M_2)^\dagger}

\inferrule[H-Tensor]{M_1 = M_1^\dagger \quad M_2 = M_2^\dagger}{M_1 \otimes M_2 = (M_1 \otimes M_2)^\dagger}

\inferrule[H-Poly]{M = M^\dagger}{\textsf{Poly}(M, p) = \textsf{Poly}(M, p)^\dagger}

\inferrule[T-Poly]{M : \tau \quad M = M^\dagger}{\textsf{Poly}(M, p) : \tau}
\end{mathpar}%
}}}
\caption{Typing rules to check hermiticity and symbolic polynomials. Conditions in \textcolor{gray}{gray} are user-provided.} \label{fig:types-poly}
\end{figure}

\paragraph{Compilation Semantics}

Any polynomial $\textsf{Poly}(M, [a_0, a_1, \ldots])$ can be decomposed into even and odd parts $\textsf{Poly}(M, [a_0, 0, a_2, \ldots]) + \textsf{Poly}(M, [0, a_1, 0, a_3, \ldots])$.
The QSVT provides an efficient circuit for each part \citep{lin2022}, depicted in \Cref{fig:qsvt-circuit}. In the circuit, the phase angles $\phi_j$ are computed from the $a_j$ using a quantum signal processing (QSP) solver such as pyQSP \citep{martyn2021}.

\begin{figure}
\centering
\resizebox{\textwidth}{!}{
\tikzsetfigurename{qsvt-circuit}%
\begin{quantikz}[classical gap=0.075cm, row sep={0.85cm, between origins}]
\lstick{$\ket{a_0}$} & \gate{H} & \targ{} & \gate{R_z(\phi_0)} & \targ{} & & \targ{} & \gate{R_z(\phi_1)} & \targ{} & & \ \ldots\  & & \targ{} & \gate{R_z(\phi_{d-1})} & \targ{} & & \targ{} & \gate{R_z(\phi_d)} & \targ{} & \gate{H} & \rstick{$\ket{a_0}$} \\
\lstick{$\ket{a}$} \setwiretype{b} & & \octrl{-1} & & \octrl{-1} & \gate[2]{\circsem{M}} & \octrl{-1} & & \octrl{-1} & \gate[2]{\circsem{M}^\dagger} & \ \ldots\  & \gate[2]{\circsem{M}} & \octrl{-1} & & \octrl{-1} & \gate[2]{\circsem{M}^\dagger} & \octrl{-1} & & \octrl{-1} & & \rstick{$\ket{a}$} \\
\lstick{$\ket{x}$} \setwiretype{b} & & & & & & & & & & \ \ldots\  & & & & & & & & & & \rstick{$\ket{x}$}
\end{quantikz}}
\caption{Compilation of a degree-$d$ polynomial using QSVT\@. Even $d$ is shown; the odd case ends on $\circsem{M}$.} \label{fig:qsvt-circuit}
\end{figure}

\section{Cost Model} \label{sec:cost-model}

In this section, we present a cost model provided as an analysis in \LangName{} that enables the developer to estimate query and subnormalization costs of programs. Building on these principles, we analytically compare the efficiency of different approaches to realize matrix polynomials.
The cost model informs optimization decisions in two ways: determining which factoring transformations improve cost and guiding a fallback to other methods should QSVT incur excessive query costs.

\subsection{Costs of Core Language} \label{sec:core-language-costs}

In \Cref{tbl:core-language-costs}, we summarize the runtime costs of each operator in the core language of \LangName{}.
This table combines and generalizes prior results from the theoretical literature, in particular those of \citet{gilyen2019,lin2022,harrigan2024,dalzell2025}.

\paragraph{Queries}
The first cost is the number of queries to black-box block encoding oracles, examples of which are $\oracle{X}$ and $\oracle{Y}$ from \Cref{sec:simulation-example} and $\oracle{U_A}$ and $\oracle{U_B}$ from \Cref{sec:regression-example}.
This quantity is proportional to the precise total number of logic gates but less subject to implementation variance.
By definition, a black-box block encoding makes one query. Adjoints make the same number of queries as the original expression. For all other operators, the number of queries is the total of the operands.

\paragraph{Subnormalization}
The second cost is the subnormalization $\alpha$ of \Cref{def:block-encoding} that scales the encoded matrix and is proportional to the expected number of circuit repetitions needed to produce the matrix.
Black-box terms have subnormalization specified by the user.
Adjoints have subnormalization equal to the original expression.
Following \Cref{def:lcu}, the subnormalization of a sum is the sum of those of the summands, weighted by the absolute value of the coefficients.
For products and tensor products, the subnormalization is the product of those of the operands.

Revisiting the rule \textsc{T-Choice} for direct sums in \Cref{fig:core-types}, a direct sum is only well-defined when the operands have equal subnormalization; the direct sum then takes on that subnormalization.
The reason is that $(A / \alpha_1) \oplus (B / \alpha_2) = (A \oplus B) / \alpha$ for some $\alpha$ only when $\alpha_1 = \alpha_2 = \alpha$, a restriction unique to direct sums.
The side condition in \textsc{T-Choice} thus ensures that a direct sum is well-defined, and the \LangName{} type checker automatically checks this condition by computing $\alpha$ via \Cref{tbl:core-language-costs}.

\paragraph{Qubits}
The third cost is the number of ancilla qubits required to implement the operator.
For black-box block encodings, this number is user-specified.
For adjoints, it is the same as the original expression.
For the other operators, it is the maximum of the number of qubits of the operands.
Sums and products also introduce $\lceil \log n \rceil$ selection bits, as discussed in \Cref{sec:denotational-semantics}.

\paragraph{Circuit Costs}
Though not listed in the table, the total gate count of the computation is the product of the subnormalization, number of queries, and number of gates per query.
The last term depends on the implementation of each $\mathcal{B}[A]$, and provides information orthogonal to the other dominant costs in the table.
We report concrete gate and qubit counts empirically in \Cref{sec:eval-comparison}.

\begin{table}
\caption{Query, subnormalization, and ancilla costs of the core language of \LangName{}.} \label{tbl:core-language-costs}
\begin{tabular}{c c c c}
\toprule
Operator & \# Queries $k$ & Subnormalization $\alpha$ & \# Qubits $m$ \\
\midrule
$\mathcal{B}[A]$ & $1$ & user-specified & user-specified \\[0.5ex]
$M^\dagger$ & $k_M$ & $\alpha_M$ & $m_M$ \\[0.5ex]
$\sum_{j=1}^n \lambda_j M_j$ & $\sum_{j=1}^n k_j$ & $\sum_{j=1}^n \abs{\lambda_j} \alpha_j$ & $\lceil \log n \rceil + \max_j m_j$ \\[0.5ex]
$\prod_{j=1}^n M_j$ & $\sum_{j=1}^n k_j$ & $\prod_{j=1}^n \alpha_j$ & $\lceil \log n \rceil + \max_j m_j$ \\[0.5ex]
$\bigoplus_{j=1}^n M_j$ & $\sum_{j=1}^n k_j$ & $\alpha_j$ (all equal) & $\max_j m_j$ \\[0.5ex]
$\bigotimes_{j=1}^n M_j$ & $\sum_{j=1}^n k_j$ & $\prod_{j=1}^n \alpha_j$ & $\max_j m_j$ \\[0.25ex]
\bottomrule
\end{tabular}
\end{table}

\subsection{Costs of Polynomials} \label{sec:polynomials}

\begin{table}
\vspace*{1ex}%
\caption{Worst-case (mixed-parity) costs of implementations of $\textsf{Poly}(M, p)$ where $p = [a_0, \ldots, a_d]$.}\label{tbl:polynomial-costs}
\begin{tabular}{c c c c}
\toprule
Method & \# Queries $k$ & Subnormalization $\alpha$ & \# Qubits $m$ \\
\midrule
LCU & $k_M \sum_{j \mid a_j \neq 0} j$ & $\norm{p(\alpha_M x)}_1$ & $\lceil \log d \rceil + d + m_M$ \\[0.5ex]
Horner & $k_M d$ & $\norm{p(\alpha_M x)}_1$ & $2d + m_M$ \\[0.5ex]
QSVT & $k_M (2d - 1)$ & $\norm{p_\text{even}(\alpha_M x)}_\infty + \norm{p_\text{odd}(\alpha_M x)}_\infty$ & $2 + m_M$ \\[0.5ex]
GQET & $k_M d$ & $\norm{T_p(\alpha_M x)}_\infty$ & $1 + m_M$ \\[0.25ex]
\bottomrule
\end{tabular}
\end{table}

In \Cref{tbl:polynomial-costs}, we summarize the cost of implementing $\textsf{Poly}(M, [a_0, \ldots, a_d])$ by four different methods:
\begin{itemize}
    \item Linear combination of unitaries (LCU, \Cref{def:lcu}), which directly evaluates the sum of monomials $\smash{\sum_{j=0}^d a_j M^j}$ using the core language operators and costs given in \Cref{tbl:core-language-costs}.
    \item Horner's method, which decomposes $\smash{\sum_{j=0}^d a_j M^j} = (((a_d M + a_{d-1} I)M + \cdots)M + a_0 I)$ and evaluates the polynomial by $d$ iterations of multiplication and addition.
    \item Quantum singular value transformation (QSVT, \Cref{eq:qsvt}), which constructs the even and odd parts of the polynomial each using the circuit in \Cref{fig:qsvt-circuit} and takes their sum.
    \item Generalized quantum eigenvalue transformation (GQET) \citep{sunderhauf2023}, which extends QSVT to mixed-parity polynomials by replacing $R_z$ with arbitrary rotations.
\end{itemize}

\paragraph{LCU vs. Horner}
As shown in the table, the direct implementation by LCU typically requires more queries than the other methods. It makes $j$ queries to $M$ for each monomial $M^j$ in the sum, whereas Horner's method only makes one query for each of the $d$ iterations. Horner's method, however, suffers a penalty by incurring two ancillas per iteration -- one for the multiplication and one for the addition -- whereas LCU performs one sum at the end with only logarithmic cost.

Both methods have the same subnormalization, the $\ell_1$ norm of the polynomial coefficients:
\[
\norm{p(\alpha_M x)}_1 = \sum_{j=0}^d\, \abs*{a_j \alpha_M^j}\,,
\]
and is not affected by the difference in order and factoring of arithmetic operations.

\paragraph{LCU vs. QSVT}
Typically, QSVT is more efficient than LCU\@. For a polynomial with mixed parity $\smash{\sum_{j=0}^d a_j M^j}$ where $a_d$ and $a_{d-1}$ are both nonzero, the sum of even and odd parts by QSVT makes $d + (d - 1)$ total queries to $M$, whereas LCU makes that many for $M^d$ and $M^{d-1}$ alone. Moreover, QSVT uses fewer ancillas --- one for the circuit in \Cref{fig:qsvt-circuit} and one for the final sum.

Taking the simplifying assumption that $p$ has fixed parity, the subnormalization for QSVT is equal to the $L_\infty$ (uniform) norm of the polynomial, which is no greater than the $\ell_1$ norm:
\[
\norm{p(\alpha_M x)}_\infty = \max_{-1 \le x \le 1}\: \abs*{\sum_{j=0}^d a_j \alpha_M^j x^j} \le \norm{p(\alpha_M x)}_1
\]
by the triangle inequality. It can be much smaller when coefficients $a_j$ have mixed signs.

\paragraph{QSVT vs. GQET}
For conceptual completeness, we also compare against the GQET, which generalizes QSVT to mixed-parity polynomials without the need to explicitly split into even and odd parts. It incurs subnormalization equal to the $L_\infty$ norm of the following modified polynomial:
\[
\norm{T_p(\alpha_M x)}_\infty = \max_{\abs{z} = 1}\: \abs*{\sum_{j=0}^d a_j T_j(\alpha_M z)} \le O(\log d) \cdot \norm{p(\alpha_M x)}_\infty\,,
\]
where $z \in \mathbb{C}$ and $T_j(x)$ is a Chebyshev polynomial of the first kind. This function is non-trivial to compare against the previous cases; \citet{sunderhauf2023} proves the asymptotic bound above.

\subsection{Soundness and Implications} \label{sec:cost-implications}

The cost model accurately predicts the costs of the core language and polynomials in \LangName{}:

\begin{theorem}
\label{thm:soundness}
A well-typed program compiles to a circuit with costs given by \Cref{tbl:core-language-costs,tbl:polynomial-costs}.
\end{theorem}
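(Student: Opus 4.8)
The plan is to prove \Cref{thm:soundness} by structural induction on $M$, showing that for each syntactic form the circuit produced by the compilation semantics of \Cref{fig:core-circuit} (and, for $\textsf{Poly}$, \Cref{fig:qsvt-circuit}) has exactly the query count $k$, subnormalization $\alpha$, and ancilla count $m$ listed in the corresponding row of \Cref{tbl:core-language-costs} or \Cref{tbl:polynomial-costs}. The base case $\mathcal{B}[A]$ is immediate from the hypotheses: the circuit is the user-supplied oracle, counted as one query, with its user-specified ancilla count and subnormalization. The adjoint case is also immediate, since reversing a circuit preserves gate and qubit counts, and the top-left block of $\circsem{M}^\dagger$ is the conjugate transpose of that of $\circsem{M}$, hence has the same subnormalization.

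For the query and qubit columns of the composite core operators the argument is uniform and light. In each of the compiled circuits for $\lambda_1 M_1 + \lambda_2 M_2$, $M_1 \cdot M_2$, $M_1 \oplus M_2$, and $M_1 \otimes M_2$ -- and their $n$-ary generalizations -- each sub-circuit $\circsem{M_j}$ occurs exactly once with no duplication, so by the inductive hypothesis the total number of oracle queries is $\sum_j k_j$. For ancillas, the direct sum and tensor product reuse the operands' registers, giving $\max_j m_j$, while the sum and product additionally allocate the $\lceil \log n \rceil$-qubit selection/flag register discussed in \Cref{sec:denotational-semantics} (for the binary product this is the single $\ket{a_0}$ flag), giving $\lceil \log n \rceil + \max_j m_j$.

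The subnormalization column is where the content lies, and is where we import the cited constructions. For a sum we expand the \textsc{Prepare}--\textsc{Select}--\textsc{Prepare}$^\dagger$ circuit of \Cref{def:lcu}: by the inductive hypothesis the top-left block of $\circsem{M_j}$ is $\llbracket M_j \rrbracket / \alpha_j$, and the rotation angles of \Cref{fig:core-circuit} realize a \textsc{Prepare} loading amplitudes proportional to $\sqrt{\abs{\lambda_j}\,\alpha_j}$; composing the two \textsc{Prepare} layers with \textsc{Select} then places $\bigl(\sum_j \lambda_j \llbracket M_j \rrbracket\bigr)\big/\sum_j \abs{\lambda_j}\alpha_j$ in the top-left block, so $\alpha = \sum_j \abs{\lambda_j}\alpha_j$ as tabulated, with the controlled $Z$ gates accounting for negative coefficients. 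For a product, the anti-controlled $X$ on the flag ancilla $\ket{a_0}$ -- and its $n$-ary increment/subtraction analogue -- forces every intermediate ancilla configuration onto $\ket{\texttt{0}}$ before the next factor runs, so the composed top-left block is the ordinary matrix product of the operand top-left blocks, giving $\alpha = \prod_j \alpha_j$; this is exactly the product gadget analyzed by \citet{sunderhauf2025,dalzell2025}. For a direct sum, the side condition $\alpha_1 = \alpha_2$ imposed by \textsc{T-Choice} (and checked by \LangName{} using these very tables) is precisely what makes $\bigl(\llbracket M_1 \rrbracket/\alpha_1\bigr) \oplus \bigl(\llbracket M_2 \rrbracket/\alpha_2\bigr) = \bigl(\llbracket M_1 \rrbracket \oplus \llbracket M_2 \rrbracket\bigr)/\alpha$, so the common value is $\alpha$. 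For a tensor product, the Kronecker product of the two block encodings has top-left block the Kronecker product of the two top-left blocks, giving $\alpha = \prod_j \alpha_j$.

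Finally, for $\textsf{Poly}(M, [a_0, \ldots, a_d])$ we dispatch the four rows of \Cref{tbl:polynomial-costs}. The LCU and Horner rows reduce to the core-language case already proved: they desugar the polynomial into core operators -- as the sum $\sum_j a_j M^j$ and as the nested form $(((a_d M + a_{d-1} I)M + \cdots)M + a_0 I)$ respectively -- so their query and ancilla counts follow by unrolling the core recurrences (the power $M^j$ contributing $j$ queries), and their subnormalization is the $L_1$ norm $\sum_j \abs{a_j \alpha_M^j} = \norm{p(\alpha_M x)}_1$, which is invariant under reassociation. For QSVT, splitting $p$ into even and odd parts and compiling each by \Cref{fig:qsvt-circuit} uses $d$ and $d - 1$ interleaved copies of $\circsem{M}$, for $2d - 1$ queries and $2 + m_M$ ancillas; by the quantum signal processing guarantee of \citet{gilyen2019,martyn2021} each part is realized in the top-left block with subnormalization equal to its $L_\infty$ norm (the phase angles exist precisely because the normalized part is bounded by $1$ on $[-1, 1]$), and the final core sum contributes $\norm{p_\text{even}(\alpha_M x)}_\infty + \norm{p_\text{odd}(\alpha_M x)}_\infty$, with the fixed-parity entry the special case where one part vanishes. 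The GQET row follows directly from the analysis of \citet{sunderhauf2023}. The main obstacle will be the subnormalization bookkeeping for the product gadget and its $n$-ary generalization -- verifying that the flag/increment machinery really does post-select every intermediate ancilla to zero -- together with getting the \textsc{Prepare} amplitudes right, weighted by the operand subnormalizations $\alpha_j$ rather than merely the coefficients $\lambda_j$; the remaining cases are either elementary gate counting or direct appeals to the cited theorems.
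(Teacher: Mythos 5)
Your proposal is correct and follows essentially the same route as the paper's own proof: structural induction on the program, reading query and ancilla counts directly off the circuits of \Cref{fig:core-circuit,fig:qsvt-circuit}, and deferring the subnormalization claims for sums, products, and tensor products to the cited constructions, with LCU/Horner handled by induction and QSVT by the conditions on $P(x)$ in \Cref{eq:qsvt}. You simply supply more of the bookkeeping the paper leaves implicit (e.g.\ that the \textsc{Prepare} amplitudes must be weighted by $\sqrt{\abs{\lambda_j}\,\alpha_j}$ for the tabulated $\alpha = \sum_j \abs{\lambda_j}\alpha_j$ to come out), which is a faithful elaboration rather than a different argument.
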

\begin{proof}
By induction on the structure of the program. Query and ancilla counts follow directly from the circuits in \Cref{fig:core-circuit,fig:qsvt-circuit}.
Subnormalization for sums, products, and tensor products is proven by \citet{gilyen2019}.
Subnormalization for polynomials by LCU and Horner follows by induction.
For QSVT, subnormalization follows from the conditions on $P(x)$ in \Cref{eq:qsvt}.
\end{proof}

The cost model also offers a convenient way to analyze the effect -- or lack thereof -- of refactoring operators in matrix expressions.
Horner's method refactors the additions and multiplications in a polynomial, which eliminates redundant queries but cannot change the subnormalization.

\Cref{sec:regression-example} illustrates how more general instances of subexpression reuse do not lead to speedup. Absent additional structure of $A$ and $B$, the expression $(A + B) \cdot (A + B)$ requires two additions, one multiplication, and four total queries to $A$ and $B$\@.
Exponentiation by squaring is also not admissible in general: $A^{128} = A^{64} \cdot A^{64}$, but squaring $A^{64}$ costs the same as multiplying it by $A$ for 64 times.

\section{Optimizations} \label{sec:optimizations}

In this section, we present the optimizations of sum fusion and polynomial fusion in \LangName{}, along with a set of additional rewrites that enable and complement these optimizations.
We show that the system overall is sound, strongly normalizing, and cost-nonincreasing.

\subsection{Sum Fusion}

The overarching principle of sum fusion is to flatten nested linear combinations of expressions to eliminate intermediate overhead from subnormalization:
\begin{equation*}
\sum_{k} \left( \sum_{j} a_{k,j} \mathcal{B}[M_j] \right) \mapsto \sum_{j} \left(\sum_k a_{k,j}\right) \mathcal{B}[M_j]
\end{equation*}

\paragraph{Soundness}
Sum fusion preserves the block encoding semantics of the expression, as can be seen by direct algebraic simplification.
Note that it does not strictly preserve the compilation semantics.

\paragraph{Cost Reduction}
When all coefficients are positive, sum fusion leaves queries and subnormalization unchanged and modestly reduces ancilla count.
But when some signs are negative, it can cancel queries and reduce subnormalization by the triangle inequality: $\sum_{j} \abs{\sum_k a_{k,j}} \le \sum_{k} \sum_{j} \abs{a_{k,j}}$.

The compilation of sums benefits from another practical optimization of merging the subnormalization of each sub-expression with its coefficient.
For example, consider the expression $A + 100\, B$ with $\alpha_A = 100$ and $\alpha_B = 1$.
Then, no rotation to prepare $[\smash{\sqrt{1}}, \smash{\sqrt{100}}]$ is needed because simply adding the encodings of $A$ and $B$ gives the correct weighted sum where $\alpha$ effectively scales $B$ by 100.

\subsection{Polynomial Fusion}

The overarching principle of polynomial fusion is to merge monomials with the same base expression into symbolic terms that enable more efficient implementation by QSVT\@:
\begin{equation*}
\sum_{j=0}^d a_j M^j \mapsto \textsf{Poly}(M, [a_0, \ldots, a_d])
\end{equation*}

\paragraph{Soundness}
Like sum fusion, polynomial fusion preserves the block encoding semantics of the expression (by definition of \textsf{Poly}) but does not strictly preserve the compilation semantics.

\paragraph{Cost Reduction}
For all fixed-parity polynomials, fusion into \textsf{Poly} and implementation by QSVT reduce the number of queries to the degree $d$ of the polynomial, whereas in LCU it is greater than $d$ for non-monomials.
As shown in \Cref{sec:polynomials}, subnormalization reduces to the $L_\infty$ norm, which is no greater than the $\ell_1$ norm as in LCU and can be strictly less for mixed-sign coefficients.

QSVT must separate mixed-parity polynomials into even and odd parts, which can increase the number of queries by up to a factor of two over Horner's method in principle.
As we show in \Cref{sec:evaluation}, the improved subnormalization typically outweighs the cost of these queries.
\LangName{} falls back to LCU or Horner's method otherwise, ensuring that cost is nonincreasing overall.

\subsection{Additional Transformations}

\begin{figure}
\def\MathparLineskip{\lineskip=0.3ex}%
\begin{mathpar}
\textsf{Poly}(A, f) \cdot \textsf{Poly}(A, g) \mapsto \textsf{Poly}(A, f \cdot g)

\textsf{Poly}(A, f) + \textsf{Poly}(A, g) \mapsto \textsf{Poly}(A, f + g) \vspace{1.55ex}

\textsf{Poly}(A, f) \oplus \textsf{Poly}(B, f) \mapsto \textsf{Poly}(A \oplus B, f)

\textsf{Poly}(\textsf{Poly}(A, f), g) \mapsto \textsf{Poly}(A, g \circ f) \vspace{1.55ex}

(A \cdot B) + (A \cdot C) \mapsto A \cdot (B + C)

(B \cdot A) + (C \cdot A) \mapsto (B + C) \cdot A \vspace{1.55ex}

(A \cdot B) \oplus (A \cdot C) \mapsto (I \otimes A) \cdot (B \oplus C)

(B \cdot A) \oplus (C \cdot A) \mapsto (B \oplus C) \cdot (I \otimes A) \vspace{1.55ex}

(A \otimes B) + (A \otimes C) \mapsto A \otimes (B + C)

(B \otimes A) + (C \otimes A) \mapsto (B + C) \otimes A \vspace{1.55ex}

(A \otimes B) \oplus (A \otimes C) \mapsto A \otimes (B \oplus C)

(B \otimes A) \oplus (C \otimes A) \mapsto (B \oplus C) \otimes A
\end{mathpar}\vspace*{-1.7ex}%
\begin{mathpar}
A \oplus A \mapsto I \otimes A

A \cdot I \mapsto A

A^{\dagger} \mapsto A \quad \text{(when $A = A^\dagger$)}

(A^{\dagger})^{\dagger} \mapsto A \vspace{1.55ex}

(A \cdot B)^{\dagger} \mapsto B^{\dagger} \cdot A^{\dagger}

(A + B)^{\dagger} \mapsto A^{\dagger} + B^{\dagger}

(A \otimes B)^{\dagger} \mapsto A^{\dagger} \otimes B^{\dagger}

(A \oplus B)^{\dagger} \mapsto A^{\dagger} \oplus B^{\dagger}
\end{mathpar}
\caption{Selection of additional rewrites that enable and complement sum and polynomial fusion.} \label{fig:additional-transformations}
\end{figure}

In \Cref{fig:additional-transformations}, we present additional transformations that either expose more opportunities to apply sum and polynomial fusion or eliminate redundant queries from the program.
All of the rewrites in the figure are sound by algebraic reasoning on the denotational semantics of \LangName{}.
They never increase the number of queries, and they keep the subnormalization unchanged except when exposing more opportunities for sum and polynomial fusion.

\paragraph{Polynomials}
The first few rules simplify symbolic polynomials as much as possible. They merge different polynomials with the same base by multiplying or adding the coefficients, merge direct sums of the same polynomial with different bases by taking the direct sum of the base expressions, and merge nested polynomials by composing the functions given by their coefficients.

\paragraph{Factoring}
The next set of rules factor common subexpressions to remove redundant queries.
Most originate from matrix algebra, e.g.\ distributivity of (tensor) products over (direct) sums.

In general, factoring is among the only forms of subexpression reuse that directly improve cost in block encodings. It only applies in limited cases.
The example $(A + B) \cdot (A + B)$ from \Cref{sec:cost-implications} does not factor into fewer instances of $A$, whereas $(A \cdot B) + (A \cdot C) = A \cdot (B + C)$ does.

\paragraph{Simplification}
The last few rules simplify the expression by pushing down adjoints and eliminating constants.
When the type checker finds that a matrix is Hermitian, adjoints can cancel.

\paragraph{Implementation}
In \LangName{}, polynomial discovery is heuristic: the rewrite system does not guarantee that an optimal polynomial form will be found.
That said, under a well-founded measure on expressions that strictly decreases under the rules, the rewrite system is strongly normalizing.
Supposing the following priority order for rewrites, the system reaches a unique normal form.

In our implementation, optimization proceeds bottom-up, such that subexpressions are optimized first. For sums, rewrites apply in the following order: like terms are flattened, common subexpressions are factored, scalar coefficients are merged into polynomials, polynomials with the same base are combined, and constant terms are merged with polynomials. For products, factors are flattened, polynomial product fusion is applied when possible, and repeated factors are collapsed to $\textsf{Poly}$.

\section{Evaluation} \label{sec:evaluation}

We implemented \LangName{} as an embedded language in Python along with a compiler, simulator, and benchmark suite. In this section, we use \LangName{} to answer the following research questions:
\begin{enumerate}[leftmargin=3em]
  \item [\emph{RQ1.}] By how much do the proposed optimizations reduce the cost of matrix expressions?
  \item [\emph{RQ2.}] Can the \LangName{} system empirically analyze the costs of known quantum algorithms?
  \item [\emph{RQ3.}] By how much do existing circuit optimizers reduce the costs of these programs?
  \item [\emph{RQ4.}] How scalable is the \LangName{} compiler in compile time with varying problem size?
\end{enumerate}

\paragraph{Implementation}
Given a program, the compiler performs type checking (\Cref{sec:core-language}), cost analyses (\Cref{sec:cost-model}), and optimizations (\Cref{sec:optimizations}). It then outputs a quantum circuit in the OpenQASM 2.0 \citep{cross2017} format.
To solve for QSP phase angles, the compiler invokes pyQSP \citep{martyn2021,chao2020,dong2021} or optionally PennyLane \citep{bergholm2022}.
For testing, the simulator invokes Quimb \citep{gray2018} to perform classical circuit simulation.

We have released this package as an open-source repository. Moreover, all source code, benchmarks, and experimental scripts are available as part of the software artifact of this paper.

\subsection{RQ1: Optimization of Matrix Expressions}

\begin{enumerate}
\item [\emph{RQ1.}] By how much do the proposed optimizations reduce the cost of matrix expressions?
\end{enumerate}

\paragraph{Benchmarks}
This question requires us to establish a set of executable program or circuit benchmarks for the emerging domain of quantum linear algebra applications. We used \LangName{} to express block encodings of the input matrices corresponding to three applications from the literature:
\begin{itemize}
  \item \emph{Simulation} (\texttt{penalized-coupler}): a Hamiltonian describing a coupled system subject to a penalty function, which resembles simulation of an Ising model \citep{cervera-lierta2018} or the adiabatic optimization of a constrained satisfaction problem \citep{farhi2000}.
  \item \emph{Regression} (\texttt{ols-ridge}): a regularized Gram matrix that interpolates between ordinary least squares \citep{chakraborty2023} and ridge regression \citep{yu2021} for a model.
  \item \emph{Image Processing} (\texttt{laplacian-filter}): a two-dimensional Laplacian stencil over a rectangle, as applied in quantum algorithms for edge detection \citep{fan2019}.
\end{itemize}

Though some of these applications are not realizable on near-term hardware, they are representative of the design space of the inputs and structure of quantum algorithms for linear algebra, and they test the effectiveness of \LangName{} for implementing matrix expressions in different domains.

\paragraph{Cost Metric}
For each benchmark, we computed its runtime cost before and after optimization by the \LangName{} compiler. We define cost using a formula adapted from \citet{sunderhauf2024}:
\[
\text{Cost} = \text{\# Queries} \times \text{Subnormalization},
\]
where the first term is the number of queries to oracles that encode basic matrices, examples of which are \oracle{$X$} and \oracle{$Y$} from \Cref{sec:simulation-example} and \oracle{$U_A$} and \oracle{$U_B$} from \Cref{sec:regression-example}. This number is proportional to the precise count of quantum logic gates in each run of the circuit, which is subject to implementation variance.
The second term, subnormalization, is proportional to the number of runs of the circuit required to successfully produce the target matrix through post-selection (\Cref{sec:qcla}).

\paragraph{Results}
In \Cref{tbl:benchmark-comparison}, we present the cumulative effect of the optimizations of \Cref{sec:optimizations} in terms of runtime cost reduction on each benchmark.
The results indicate meaningful speedups under certain settings, ranging from $2.6\times$ for the simulation examples to $25.4\times$ for the regression examples.
Generally, the speedup is larger for longer programs with higher-degree polynomials.

The reported speedups are attributable to the compiler passes proposed in this work: sum fusion and polynomial fusion (and the related rewrites in \Cref{sec:optimizations}), which restructure expressions that would naively compile to nested LCU into compact forms enabling compilation via QSVT\@.

\begin{table}
\centering
\caption{Runtime cost comparison for matrix expression benchmarks. Units for cost are \# of queries.}
\label{tbl:benchmark-comparison}
\begin{tabular}{ l r r r }
  \toprule
  & \multicolumn{2}{c}{{\# Queries $\times$ Subnormalization = Cost}} & \\
  \cmidrule{2-3}
  {Matrix Expression} & \multicolumn{1}{c}{{Unoptimized}} & \multicolumn{1}{c}{{Optimized}} & {Speedup} \\
  \midrule
  \texttt{simulation-example} (\S~\ref{sec:simulation-example}) & $8 \times 2.6 = 20.8$ & $4 \times 2.0 = 8.0$ & 2.6$\times$ \\
  \texttt{regression-example} (\S~\ref{sec:regression-example}) & $12 \times 16.0 = 192.0$ & $8 \times 1.0 = 8.0$ & 24.0$\times$ \\
  \midrule
  \texttt{penalized-coupler} & $6 \times 8.2 = 49.2$ & $3 \times 6.2 = 18.6$ & 2.6$\times$ \\
  \texttt{laplacian-filter} & $8 \times 59.3 = 474.2$ & $2 \times 27.8 = 55.7$ & 8.5$\times$ \\
  \texttt{ols-ridge} & $148 \times 529.7 = \num{78398.8}$ & $18 \times 171.5 = \num{3087.0}$ & 25.4$\times$ \\
  \bottomrule
\end{tabular}
\end{table}

\subsection{RQ2: Resource Analysis of Quantum Algorithms}

\begin{enumerate}
  \item [\emph{RQ2.}] Can the \LangName{} system empirically analyze the costs of known quantum algorithms?
\end{enumerate}

\paragraph{Benchmarks}
For this question, we used \LangName{} to express block encodings of matrix polynomials that underpin three major quantum algorithms as described by \citet{martyn2021}:
\begin{itemize}
  \item \emph{Matrix inversion}: the optimal polynomial approximation by \citet{sunderhauf2025} for $f(A) = A^{-1}$ in the quantum linear system solver \citep{harrow2009,childs2017}.
  \item \emph{Hamiltonian simulation}: the Jacobi-Anger decomposition of $f(A) = e^{-iAt}$ in the quantum algorithm to solve the time-dependent Schrödinger equation \citep{low2019}.
  \item \emph{Spectral thresholding}: the Fourier-Chebyshev expansion of $f(A) = \mathrm{sign}(A) = A(A^2){}^{-1/2}$ in quantum search, phase estimation, and eigenvalue filtering \citep{martyn2021}.
\end{itemize}

The benchmarks of RQ1 correspond to the input matrices $A$ of these functions $f(A)$.
But unlike the input matrices, these $f(A)$ are already assumed by their authors to utilize QSVT for polynomial evaluation, on the basis that QSVT is more efficient than other methods such as LCU\@.
Our evaluation is thus intended to empirically analyze, not surpass, these theoretical claims of efficiency.

\paragraph{Cost Metric}
We computed the runtime cost (defined as in RQ1) of each benchmark, using both fusions plus different implementations of polynomials. The first implementation is by linear combination of unitaries (\Cref{def:lcu}), and the second is by the quantum singular value transformation (\Cref{eq:qsvt}) as intended by the authors of these algorithms.
Their difference effectively captures the additional speedup contributed by polynomial fusion via QSVT, versus sum fusion alone.

As another point of comparison, we computed the cost of the polynomial evaluated by Horner's method as described in \Cref{sec:polynomials}.
We also evaluated against the GQET from \Cref{sec:polynomials}, but its subnormalization and costs for our programs are higher than LCU and are omitted below.

\paragraph{Results}
In \Cref{tbl:algo-benchmark-comparison}, we present the runtime cost for each benchmark and polynomial implementation, as calculated by the \LangName{} compiler immediately before circuit generation.
These results confirm that QSVT is favorable in cost for these benchmarks by several orders of magnitude.
They also indicate that subnormalization is the main bottleneck --- Horner's method achieves optimal query count, but its repeated use of arithmetic accumulates a large subnormalization.

\LangName{} automatically produces the efficient circuits intended by the authors of these algorithms and empirically confirms the theoretical prediction of efficiency of QSVT over LCU\@. To our knowledge, our system is among the first to achieve this goal for executable quantum programs.

\begin{table}
\centering
\caption{Runtime cost comparison for quantum algorithms in \LangName{}. Each benchmark is a polynomial approximation of a matrix function with given parity, truncated to degree $d$. Column ``LCU'' reports a baseline implementation of polynomials using linear combination of unitaries (\Cref{def:lcu}), ``Horner'' reports the use of Horner's method (\Cref{sec:polynomials}), and ``QSVT'' reports the use of the quantum singular value transform.}
\label{tbl:algo-benchmark-comparison}
\resizebox{\textwidth}{!}{\begin{tabular}{ l c c r r r }
  \toprule
  & & & \multicolumn{3}{c}{{\# Queries $\times$ Subnormalization = Cost}} \\
  \cmidrule{4-6}
  {Quantum Algorithm} & Parity & $d$ & \multicolumn{1}{c}{{LCU}} & \multicolumn{1}{c}{{Horner}} & \multicolumn{1}{c}{{QSVT}} \\
  \midrule
  Matrix inversion & odd & 7 & $49 \times \num{5.15e+04} = \num{2523395.9}$ & $\textbf{13} \times \num{5.15e+04} = \num{669472.4}$ & $\textbf{13} \times \textbf{5.4} = \textbf{69.8}$ \\
  Hamiltonian simulation & mixed & 10 & $120 \times \num{1066} = \num{127935.9}$ & $\textbf{15} \times \num{1066} = \num{15992.0}$ & $29 \times \textbf{2.0} = \textbf{58.0}$ \\
  Spectral thresholding & odd & 10 & $99 \times \num{5.27e+05} = \num{52174866.5}$ & $\textbf{19} \times \num{5.27e+05} = \num{10013358.2}$ & $\textbf{19} \times \textbf{2.7} = \textbf{51.6}$ \\
  \bottomrule
\end{tabular}}
\end{table}

\subsection{RQ3: Comparison to Existing Circuit Optimizers} \label{sec:eval-comparison}

\begin{enumerate}
  \item [\emph{RQ3.}] By how much do existing circuit optimizers reduce the costs of these programs?
\end{enumerate}

\paragraph{Benchmarks}
We used \LangName{} to compile each unoptimized program above to a quantum circuit in the gate set $\{H, X, \textrm{CNOT}, R_z(\theta)\}$ of \citet{nam2018}.
For sake of benchmarking, we instantiated each black-box oracle as a random but consistent single-qubit rotation gate.
We invoked on each circuit each of these quantum circuit optimizers: Quartz \citep{xu2022}, wisq \citep{xu2025}, Qiskit \citep{javadiabhari2024}, Feynman \citep{amy2014}, VOQC \citep{hietala2021}, Pytket \citep{sivarajah2020} (peephole and ZX), and QuiZX \citep{quizx}. Each optimizer was given input in the Nam gate set, the output of each was confirmed to remain in this gate set, and those that support user-selectable gate sets were explicitly invoked with this gate set.

\paragraph{Cost Metric}
We counted the number of qubits and \emph{non-Clifford} gates in the circuits output by each optimizer and the circuits generated by \LangName{} after the optimizations of \Cref{sec:optimizations}. These counts ignore subnormalization, pretending for simplicity that the circuit only runs once.

A \emph{non-Clifford} gate is not generated by products or tensor products of $\{H, \textrm{CNOT}, R_z(\pi / 2)\}$, and incurs significant overhead under predominant quantum error-correcting codes \citep{fowler2012}.
In the Nam gate set, the non-Clifford gates are $R_z(\theta)$ with $\theta$ not an integral multiple of $\pi/2$.

We chose the Nam gate set because its support for continuous rotations makes it the closest fit for QSVT circuits among standard gate sets supported by existing optimizers.
Using the Clifford+$T$ gate set is an alternative but would be subject to confounding effects.
Specifically, the $T$-count of an $R_z(\theta)$ gate is dictated by the specific value of $\theta$ as well as the algorithm and tolerance used for the  discretization of $\theta$ in unitary synthesis --- which are not under study in this work.

\begin{figure}
\tikzsetfigurename{optimization-comparison}
\centering
\begin{tikzpicture}
\begin{axis}[
    ybar,
    bar width=0.08,
    width=13.5cm,
    height=5.7cm,
    ymin=0,
    ymax=110,
    ytick={0,25,50,75,100},
    yticklabels={0\%,25\%,50\%,75\%,100\%},
    ylabel={Non-Clifford Gates Remaining},
    xlabel={},
    xmin=-0.3,
    xmax=12.5,
    xtick={1.1,3.1,5.1,7.1,9.1,11.1},
    xticklabels={simulation-example,regression-example,penalized-coupler,laplacian-filter,matrix-inversion,hamiltonian-simulation},
    x tick label style={rotate=20,anchor=east,font=\ttfamily\scriptsize},
    label style={font=\small},
    axis line style={draw=gray},
    xtick pos=left,
    legend cell align={left},
    legend style={
        at={(0.5,1.02)},
        anchor=south,
        legend columns=5,
        column sep=0.25em,
        /tikz/every even column/.append style={column sep=1em, text width=4.75em},
        draw=none,
        font=\footnotesize,
    },
    bar shift=0pt,
    cycle list={
        {blue!50!white,fill=blue!50!white},
        {teal!50!white,fill=teal!50!white},
        {green!50!white,fill=green!50!white},
        {lime,fill=lime},
        {yellow,fill=yellow},
        {orange!50!white,fill=orange!50!white},
        {red!50!white,fill=red!50!white},
        {magenta!50!white,fill=magenta!50!white},
        {violet,fill=violet},
    },
]

\addplot coordinates {    % Quartz
    (0.7,68/100*100)      % simulation-example
    (2.7,69/71*100)       % regression-example
    (4.7,148/160*100)     % penalized-coupler
    (6.7,798/806*100)     % laplacian-filter
    (8.7,14120/14120*100) % matrix-inversion
};

\addplot coordinates {     % wisq
    (0.8,18/100*100)       % simulation-example
    (2.8,30/71*100)        % regression-example
    (4.8,65/160*100)       % penalized-coupler
    (6.8,360/806*100)      % laplacian-filter
    (8.8,14023/14120*100)  % matrix-inversion
    (10.8,66420/66420*100) % hamiltonian-simulation
};

\addplot coordinates {     % Qiskit
    (0.9,100/100*100)      % simulation-example
    (2.9,58/71*100)        % regression-example
    (4.9,137/160*100)      % penalized-coupler
    (6.9,688/806*100)      % laplacian-filter
    (8.9,12239/14120*100)  % matrix-inversion
    (10.9,57090/66420*100) % hamiltonian-simulation
};

\addplot coordinates {     % Feynman
    (1.0,58/100*100)       % simulation-example
    (3.0,53/71*100)        % regression-example
    (5.0,94/160*100)       % penalized-coupler
    (7.0,514/806*100)      % laplacian-filter
    (9.0,10000/14120*100)  % matrix-inversion
    (11.0,47304/66420*100) % hamiltonian-simulation
};

\addplot coordinates {     % VOQC
    (1.1,14/100*100)       % simulation-example
    (3.1,54/71*100)        % regression-example
    (5.1,55/160*100)       % penalized-coupler
    (7.1,314/806*100)      % laplacian-filter
    (9.1,5266/14120*100)   % matrix-inversion
    (11.1,21070/66420*100) % hamiltonian-simulation
};

\addplot coordinates {     % Pytket
    (1.2,68/100*100)       % simulation-example
    (3.2,34/71*100)        % regression-example
    (5.2,137/160*100)      % penalized-coupler
    (7.2,584/806*100)      % laplacian-filter
    (9.2,10501/14120*100)  % matrix-inversion
};

\addplot coordinates {     % Pytket ZX
    (1.3,32/100*100)       % simulation-example
    (3.3,52/71*100)        % regression-example
    (5.3,85/160*100)       % penalized-coupler
    (7.3,394/806*100)      % laplacian-filter
    (9.3,6038/14120*100)   % matrix-inversion
    (11.3,30387/66420*100) % hamiltonian-simulation
};

\addplot coordinates {     % QuiZX
    (1.4,2/100*100)        % simulation-example
    (3.4,52/71*100)        % regression-example
    (5.4,53/160*100)       % penalized-coupler
    (7.4,218/806*100)      % laplacian-filter
    (9.4,3114/14120*100)   % matrix-inversion
};

\addplot coordinates {     % Cobble
    (1.5,2/100*100)        % simulation-example
    (3.5,27/71*100)        % regression-example
    (5.5,34/160*100)       % penalized-coupler
    (7.5,18/806*100)       % laplacian-filter
    (9.5,53/14120*100)     % matrix-inversion
    (11.5,631/66420*100)   % hamiltonian-simulation
};

\legend{Quartz,wisq,Qiskit,Feynman,VOQC,Pytket,Pytket ZX,QuiZX,\LangName{} (Ours)}
\addlegendimage{ybar, fill=white, pattern=north east lines, pattern color=gray, draw=none, draw opacity=0}
\addlegendentry{No result}

\addplot[ybar, pattern=north east lines, pattern color=blue!50!white, draw=none] coordinates {(10.7,100)};
\addplot[ybar, pattern=north east lines, pattern color=orange!50!white, draw=none] coordinates {(11.2,100)};
\addplot[ybar, pattern=north east lines, pattern color=magenta!50!white, draw=none] coordinates {(11.4,100)};

\node[anchor=north, color=darkgray] at (axis cs:0.18,110) {\scriptsize 100\% =};
\node[anchor=north, color=darkgray] at (axis cs:1.1,110) {\scriptsize 100 gates};
\node[anchor=north, color=darkgray] at (axis cs:3.1,110) {\scriptsize 71 gates};
\node[anchor=north, color=darkgray] at (axis cs:5.1,110) {\scriptsize 160 gates};
\node[anchor=north, color=darkgray] at (axis cs:7.1,110) {\scriptsize 806 gates};
\node[anchor=north, color=darkgray] at (axis cs:9.1,110) {\scriptsize 14120 gates};
\node[anchor=north, color=darkgray] at (axis cs:11.1,110) {\scriptsize 66420 gates};

\node[anchor=west, font=\small, color=violet] at (axis cs:1.46,5) {$2.0\%$};
\node[anchor=west, font=\small, color=violet] at (axis cs:3.46,41.0) {$38.0\%$};
\node[anchor=west, font=\small, color=violet] at (axis cs:5.46,24.2) {$21.2\%$};
\node[anchor=west, font=\small, color=violet] at (axis cs:7.46,5.2) {$2.2\%$};
\node[anchor=west, font=\small, color=violet] at (axis cs:9.46,4.0) {$0.4\%$};
\node[anchor=west, font=\small, color=violet] at (axis cs:11.46,5.0) {$1.0\%$};

\end{axis}
\end{tikzpicture}
\setlength{\abovecaptionskip}{1pt}%
\caption{Gate counts of circuits across benchmarks and optimizers. Each bar shows a normalized fraction of gates of the unoptimized circuit (lower is better). Slashed bars denote optimizers that crashed, used ${>}32$ GB of memory, or timed out after one hour. A benchmark is shown if ${>}3$ existing optimizers ran to completion.}
\label{fig:optimization-comparison}
\end{figure}

\begin{table}
\centering
\vspace*{1ex}%
\caption{Qubit counts of circuits across benchmarks and optimizers, relative to the unoptimized baseline program. An entry with -- indicates that the optimizer did not run to completion to produce a circuit.}
\label{tbl:qubit-reduction}
\resizebox{\textwidth}{!}{%
\begin{tabular}{ l r rrrrrrrrr }
  \toprule
  & & \multicolumn{9}{c}{{\# Qubits (\% Reduction from Unoptimized Baseline)}} \\
  \cmidrule(lr){2-11}
  {Benchmark} & {Base} & \multicolumn{1}{c}{Quartz} & \multicolumn{1}{c}{wisq} & \multicolumn{1}{c}{Qiskit} & \multicolumn{1}{c}{Feynman} & \multicolumn{1}{c}{VOQC} & \multicolumn{1}{c}{Pytket} & \multicolumn{1}{c}{Pytket ZX} & \multicolumn{1}{c}{QuiZX} & \multicolumn{1}{c}{Cobble} \\
  \midrule
  \texttt{simulation-example} & 5 & 5 (0\%) & 5 (0\%) & 5 (0\%) & 5 (0\%) & 5 (0\%) & 5 (0\%) & 5 (0\%) & 5 (0\%) & 3 (40.0\%) \\
  \texttt{regression-example} & 6 & 6 (0\%) & 6 (0\%) & 6 (0\%) & 6 (0\%) & 6 (0\%) & 6 (0\%) & 6 (0\%) & 6 (0\%) & 3 (50.0\%) \\
  \texttt{penalized-coupler} & 7 & 7 (0\%) & 7 (0\%) & 7 (0\%) & 7 (0\%) & 7 (0\%) & 7 (0\%) & 7 (0\%) & 7 (0\%) & 5 (28.6\%) \\
  \texttt{laplacian-filter} & 11 & 11 (0\%) & 11 (0\%) & 11 (0\%) & 11 (0\%) & 11 (0\%) & 11 (0\%) & 11 (0\%) & 11 (0\%) & 6 (45.5\%) \\
  \texttt{matrix-inversion} & 12 & 12 (0\%) & 12 (0\%) & 12 (0\%) & 12 (0\%) & 12 (0\%) & 12 (0\%) & 12 (0\%) & 12 (0\%) & 2 (83.3\%) \\
  \makebox[10em][l]{\texttt{hamiltonian-simulation}} & 16 & -- & 16 (0\%) & 16 (0\%) & 16 (0\%) & 16 (0\%) & -- & 16 (0\%) & -- & 6 (62.5\%) \\
  \midrule
  \Cref{fig:optimization-comparison} Average &  & (0\%) & (0\%) & (0\%) & (0\%) & (0\%) & (0\%) & (0\%) & (0\%) & (51.6\%) \\
  \midrule
  \texttt{ols-ridge} & 22 & -- & 22 (0\%) & 22 (0\%) & -- & -- & -- & -- & -- & 2 (90.9\%) \\
  \makebox[0pt][l]{\texttt{spectral-thresholding}} & 16 & -- & 16 (0\%) & 16 (0\%) & 16 (0\%) & -- & -- & -- & -- & 2 (87.5\%) \\
  \midrule
  Overall Average &  & (0\%) & (0\%) & (0\%) & (0\%) & (0\%) & (0\%) & (0\%) & (0\%) & (61.0\%) \\
  \bottomrule
\end{tabular}
}
\end{table}

\paragraph{Results}
In \Cref{fig:optimization-comparison}, we present the comparison of non-Clifford gate counts achieved by different optimizers and \LangName{} across various benchmarks. \LangName{}'s optimizations, which operate on high-level program structure rather than circuits, tie or exceed the performance of all of the evaluated circuit optimizers, and its relative performance tends to improve for larger programs.

These results show how specializing compilers to program structure can be useful for complex quantum applications. Circuit optimizers not aware of the algebraic structure of block encodings cannot remove high-level redundancies as easily. They are also restricted by their typical design, which obligates them to strictly (or very closely) preserve the semantics of the input circuit. By contrast, sum and polynomial fusion reduce subnormalization --- preserving the block encoding semantics (\Cref{sec:denotational-semantics}) but not the overly conservative circuit semantics of the program.

With that said, a developer can benefit from invoking sum and polynomial fusion in \LangName{} followed by an existing circuit optimizer to obtain orthogonal and compounding improvements.
For \texttt{hamiltonian-simulation}, the additional gate reduction from running a circuit optimizer on the output of \LangName{} ranges from 12.8\% (Qiskit) to 86.0\% (QuiZX).
The other programs in \Cref{fig:optimization-comparison} become too small after sum and polynomial fusion to permit meaningful comparison.

Our results show that existing circuit optimizers do not change the qubit usage for the programs in \Cref{fig:optimization-comparison}, whereas \LangName{} reduces it by an average of 52\% (see \Cref{tbl:qubit-reduction}; 29\% for \texttt{penalized-coupler} up to 83\% for \texttt{matrix-inversion}). Because the specific circuits analyzed here are mainly bottlenecked by gates and not qubits, this reduction is not a primary emphasis of our study.

\subsection{RQ4: Scalability in Compile Time}

\begin{enumerate}
  \item [\emph{RQ4.}] How scalable is the \LangName{} compiler in compile time with varying problem size?
\end{enumerate}

\paragraph{Benchmarks}
\begin{wrapfigure}[14]{r}{.45\textwidth}
\vspace*{-3.75ex}%
\tikzsetfigurename{fig:compile-time}%
\centering
\begin{tikzpicture}%
\begin{axis}[
    xlabel={Total Gates of Unoptimized Program},
    ylabel={Compile Time (seconds)},
    legend pos=north west,
    grid=major,
    draw=gray,
    width=6cm,
    height=5cm,
    xmin=0,
    xmax=2593612,
    ymin=0,
    ymax=0.2,
    scaled x ticks=false,
    scaled y ticks=false,
    xtick={0,5e5,1e6,1.5e6,2e6,2.5e6},
    xticklabels={0,$5\textrm{e}5$,$1\textrm{e}6$,$1.5\textrm{e}6$,$2\textrm{e}6$,$2.5\textrm{e}6$},
    y tick label style={/pgf/number format/fixed},
    legend cell align={left},
    legend style={font=\scriptsize},
    label style={font=\small},
]

\addplot[
    fill=green!30!white,
    draw=green!50!black,
    mark=none,
    area legend,
    stack plots=y
] coordinates {
    (190, 0.0006)
    (412, 0.0007)
    (1946, 0.0009)
    (3814, 0.0010)
    (6292, 0.0011)
    (8779, 0.0012)
    (25871, 0.0014)
    (41053, 0.0015)
    (58440, 0.0017)
    (77229, 0.0018)
    (98268, 0.0020)
    (120887, 0.0021)
    (145707, 0.0023)
    (170559, 0.0025)
    (407101, 0.0027)
    (529459, 0.0030)
    (660093, 0.0031)
    (797400, 0.0033)
    (943054, 0.0035)
    (1095813, 0.0037)
    (1256844, 0.0039)
    (1424528, 0.0041)
    (1600745, 0.0043)
    (1783958, 0.0046)
    (1975449, 0.0048)
    (2173571, 0.0051)
    (2380064, 0.0053)
    (2593630, 0.0056)
} \closedcycle;
\addlegendentry{Non-solver time}

\addplot[
    fill=blue!30!white,
    draw=blue!50!black,
    mark=none,
    area legend,
    stack plots=y
] coordinates {
    (190, 0.0074)
    (412, 0.0070)
    (1946, 0.0115)
    (3814, 0.0115)
    (6292, 0.0175)
    (8779, 0.0173)
    (25871, 0.0241)
    (41053, 0.0245)
    (58440, 0.0326)
    (77229, 0.0328)
    (98268, 0.0423)
    (120887, 0.0425)
    (145707, 0.0537)
    (170559, 0.0536)
    (407101, 0.0662)
    (529459, 0.0706)
    (660093, 0.0807)
    (797400, 0.0804)
    (943054, 0.0954)
    (1095813, 0.0960)
    (1256844, 0.1125)
    (1424528, 0.1133)
    (1600745, 0.1303)
    (1783958, 0.1301)
    (1975449, 0.1496)
    (2173571, 0.1500)
    (2380064, 0.1704)
    (2593630, 0.1707)
} \closedcycle;
\addlegendentry{External solver time}

\end{axis}
\end{tikzpicture}
\setlength{\abovecaptionskip}{1ex}
\caption{Compile time taken by \LangName{} using the pyQSP external solver. The standard error of the mean is less than 0.001 seconds throughout.}
\label{fig:compile-time}
\end{wrapfigure}

For this research question, we implemented a family of programs that block-encode the Chebyshev polynomials $T_n(A)$, which scale uniformly for testing, for $2 \le n \le 30$. We executed the \LangName{} compiler on each program with all optimizations enabled and with pyQSP as the external solver.
We measured the time taken to produce a circuit as the average of 10 samples. All timing results were collected on a 2.4 GHz Intel Core i9 processor.

\paragraph{Results}
In \Cref{fig:compile-time}, we present how the compile time used by \LangName{} scales with the gate count of the unoptimized program. As shown in the graph, \LangName{} can process and optimize a high-level program whose unoptimized circuit would contain millions of logic gates, in a fraction of a second.
Much of the time is spent in a single call to the external numerical solver for QSP phase angles from polynomial coefficients, a computationally intensive step \citep{dong2021} reported as a separate component of the total.
\LangName{}'s compilation speed could be directly improved by reducing overhead from the QSP solver or from the Python interpreter.

\section{Limitations and Next Steps}

In this section, we discuss current limitations of \LangName{} and opportunities for future work.

\paragraph{Basic Matrix Encodings}
\LangName{} currently asks a user to provide block encodings for basic matrices, which range from trivial unitary gates to potentially complex sub-circuits, before composing them via arithmetic operators. Basic matrices are where compositionality appears to end and case-by-case reasoning must begin.
An important next step is to design and provide abstractions for known matrix structures, such as banded, circulant, and Toeplitz matrices, from the literature \citep{sunderhauf2024, camps2022, camps2024}. A long-term goal is to develop domain-specific languages for a broader range of matrix structures relevant to applications.

\paragraph{Commutativity}
\LangName{} currently asks a user to specify whether a product of matrices commutes. Automatic checking of this condition is a hard problem today in circuit optimization, since commutativity is not an inductively defined property on matrices and numerical evaluation is not scalable.
A first step toward more automation is to hard-code important cases, such as the product of Pauli matrices. A longer term solution could be to analyze the algebraic structure of matrices using tools from representation theory or the ZX-calculus \citep{coecke2011}.

\paragraph{Rewrite Rules}
\LangName{} currently uses a set of manually implemented rewrites that invite exploration of completeness or automation.
This understanding must be grounded in new benchmarks, which are still emerging. Initial steps, however, include systematically applying rules via equality saturation \citep{tate2011} and verifying rules via proof assistants \citep{liu2022}. In the longer term, program synthesis \citep{xu2023} could help discover new rules automatically.

\paragraph{Approximation}
Like many other compilers, \LangName{} does not fully account for the approximation inherent in quantum algorithms, whose cost must be calculated and traded off to execute programs on resource-constrained hardware such as near-term noisy devices or emerging error-correcting codes.
A step toward better error and cost analysis is to formalize in the language the precision parameter $\epsilon$ of block encodings. In the longer term, methods such as those of \citet{hung2019} could help track error in QSP phase calculation, $\lambda$ state preparation, and $R_z$ synthesis.

\section{Related Work}

In this section, we survey the research that is most closely relevant to this work.

\paragraph{Quantum Programming Languages}
Researchers have developed many languages for expressing, analyzing, and verifying quantum algorithms.
Recent ideas in language design beyond circuits include automatic uncomputation \citep{bichsel2020}, pointers and memory \citep{tower}, classical effects \citep{voichick2023}, control flow in superposition \citep{qcm}, type systems for basis \citep{adams2024}, and call stacks for recursion \citep{zhang2025}.

Other work in the community has led to more effective compilers for quantum programs \citep{huang2024,sharma2025,tornow2025,peduri2022,peng2024} as well as more robust verification tools \citep{huang2025,xu2025b,chen2023}.

Our work introduces a new programming abstraction targeting a new domain --- mathematical operators for block encodings in quantum computational linear algebra. We hope these insights will lead to languages that enable developers to realize broader classes of algorithms.

\paragraph{Quantum Resource Estimation}

Parallel to the design of languages and compilers is the creation of quantum resource estimation frameworks such as Qualtran \citep{harrigan2024}, Bartiq \citep{psiquantum2024}, Azure QRE \citep{vandam2023}, and pyLIQTR \citep{obenland2025}. Given a complex algorithm that is hard to compile directly, these frameworks typically emphasize the ability to produce practical hardware cost estimates rather than explicit executable circuits.

Our work can help in developing these frameworks, which have emerging support for quantum linear algebra, into more full-fledged compilers.
We build on prior work on Qualtran \citep{harrigan2024} that observed a limited case of sum fusion by formalizing the general technique in a language with sound semantics and evaluating it across a broad set of benchmarks.

\paragraph{Quantum Singular Value Transformation}
Block encodings and the QSVT culminate a long line of research in quantum linear algebra. They generalize prior representations of matrices in algorithms for simulation \citep{low2019} and matrix inversion \citep{childs2017}, while unifying these tasks with search and phase estimation \citep{martyn2021}. Given that more general assumptions of unstructured matrix access can erase quantum speedup \citep{tang2019}, block encodings define the structure of matrices that are relevant and efficient to encode into a quantum computer.

Interest in the QSVT has led to new numerical solvers for QSP phase angles \citep{dong2021,berntson2025,alexis2026}, adaptation to mixed-parity polynomials \citep{motlagh2024,sunderhauf2023}, and even hardware realization \citep{kikuchi2023}.
Procedures to build the QSVT circuit (\Cref{fig:qsvt-circuit}) from an input list of phase angles have recently been added to the PennyLane \citep{bergholm2022} and Qmod \citep{vax2025} programming frameworks.

Our work provides a roadmap to deploy this mathematical toolkit: a high-level language and compiler that instantiate and test existing solvers and circuit constructions, a type system that checks required conditions for QSVT matrices and polynomials, and an empirical comparison of QSVT against other methods to evaluate where it gives the biggest gains in practice.

\paragraph{Classical Linear Algebra}
Ever since BLAS \citep{blackford2002} and LAPACK \citep{anderson1990}, linear algebra has been central in programming, compilers, and high-performance computing.
Researchers have developed numerous methods \citep{whaley1998,tomov2010,anzt2022, vanzee2009,kjolstad2017,puschel2005,frigo2005} to automatically tailor linear algebra to workloads and architectures. Machine learning has further accelerated interest in practical and scalable compilers \citep{paszke2019,abadi2016,frostig2018,sabne2020,chen2018}.
Of these tools, perhaps the closest analogue to \LangName{} is Eigen \citep{gunnebaud2010}, a user-facing library and optimizer for matrix expressions.

Our work takes a step toward similar goals in expressing and optimizing quantum algorithms. It also suggests that our assumptions about how to optimize programs, such as computation graphs that rely on unrestricted sharing of subexpressions, should be revisited in the quantum world.

\section{Conclusion}

Modern quantum algorithms for linear algebra constitute many of our hopes to reap the reward of practical computational advantage from our investment into hardware quantum devices.
Like an intricate puzzle, they present new challenges -- building and optimizing complex programs -- and new insights that change the way we think about and interact with computation.

In this work, we present high-level programming abstractions for quantum linear algebra, which we hope will help unlock this rare domain of potentially exponential speedup.
Through the sum and polynomial fusion optimizations, we also hope to catalyze the discovery of powerful structural rewrites analogous to loop-invariant code motion and strength reduction from classical compilers.
And by building systems that reduce need for expert-level reasoning about qubits and logic gates, we hope to eventually enable a broad range of programmers to use a quantum computer.

\begin{acks}
We thank Isaac Chuang and Patrick Rall for providing an introduction to the theory and applications of block encodings. We thank Matthew Harrigan for providing an environment to explore related ideas and Anurudh Peduri for invaluable technical discussions. We thank Swamit Tannu and Aws Albarghouthi for feedback on drafts of this work. We thank the Center for High-Throughput Computing at the University of Wisconsin--Madison for providing computational resources. Support for this research was provided by the Office of the Vice Chancellor for Research at the University of Wisconsin--Madison with funding from the Wisconsin Alumni Research Foundation.
\end{acks}

\section*{Data Availability Statement}

The artifact for this paper, including source code, benchmark programs, and evaluation package, is available on Zenodo~\citep{artifact}.

\bibliography{biblio}

\end{document}